\documentclass[pra,showpacks, showkeys, reprint, amsart,amsmath,amssymb,aps]{revtex4-2}
\usepackage{graphicx,subcaption}
\usepackage{anyfontsize}
\usepackage{dcolumn}
\usepackage{bm}
\usepackage{epstopdf}
\usepackage{xcolor}
\usepackage{nccmath}
\usepackage{amsmath}
\usepackage{amsthm}
\usepackage{float}
\usepackage{multirow}
\usepackage{appendix}
\newtheorem{theorem}{Theorem}

\usepackage{balance}
\usepackage{silence}
\usepackage{mathtools}
\newlength{\arrow}
\newcommand{\bra}[1]{\langle#1|} 
\newcommand{\ket}[1]{|#1\rangle} 
\begin{document}
	
	\preprint{APS/123-QED}
	
	\title{Construction of a Non-Linear Entanglement Witness Operator in Arbitrary Dimension Using a Given Linear Witness Operator }
	
	\author{Sonia}
	\email[]{sonia\_24phdam02@dtu.ac.in}
	\author{Satyabrata Adhikari}
	\email[]{satyabrata@dtu.ac.in}
	\affiliation{Department of Applied Mathematics,\\ Delhi Technological University, Delhi-110042, India}
\begin{abstract}
Entanglement detection is one of the important problems in quantum information theory. To deal with this problem, many entanglement detection criteria have been proposed. Among the proposed criteria, the detection of entanglement through witness operator (also known as linear entanglement witness (LEW) operator) may be considered as the most practical. Although the witness operator approach to detect entanglement is experimentally friendly, the construction of these operators is not a very simple task. Even if we are able to construct a LEW operator, our problem is not solved as it may either detect a few entangled states or not a single entangled state from a given family of entangled states. Thus, we need a constructive approach in order to tackle this type of problem. In this work, we provide a few constructions of the non-linear entanglement witnesses (NLEW) for $d_1\otimes d_2$ dimensional system from any linear entanglement witness (LEW) operator. The advantage of these constructions is that, if a LEW is unable to detect any particular entangled state described by the density operator $\rho^{ent}$ then our construction of NLEW may detect the same entangled state $\rho^{ent}$. Further, we have constructed NLEW operator that may detect not only a class of bipartite negative partial transpose entangled state (NPTES), but also positive partial transpose entangled state (PPTES). Moreover, we have shown that the constructed NLEW operators may be decomposed in terms of the tensor product of local observables and hence may be realizable in an experiment.  
\end{abstract}
	
	\keywords{}
	\maketitle{}
	
	\section{Introduction } 
Quantum entanglement \cite{EPR_1935,Schrodinger_1935} is defined as an unique phenomenon of quantum mechanics in which the two or more system of particles situated far apart from each other, are interconnected in such a way that if measurement is performed on the state of any one subsystem then it will immediately effect the state of other subsystem. This property is also known as an emergent property \cite{Caltech} and 
can be considered as future of quantum technologies. It is widely used as quantum resource in many different quantum communication tasks like quantum teleportation \cite{Wootters_1993,X-Mhu_2023}, remote state preparation \cite{Bennett_2001,pati_2000}, superdense coding \cite{bennett_1992,Nielsen_2000,harrow_2004}, quantum key distribution schemes \cite{ekert_1991}, quantum error correction \cite{brun_2006, bravyi_2025}, quantum algorithms \cite{ekert_1998} and many others \cite{Horodecki_2006}. 
 In order to perform these quantum communication tasks, it is necessary to generate entanglement between two or more subsystems. But due to the unavoidable presence of noise and imperfections during the process of generation of entanglement, it is quite difficult to identify the presence of an entanglement in the prepared composite system comprised of two or more subsystems \cite{Adhikari_2018}. Thus the detection of the presence of entanglement in an unknown given quantum state is one of the important problems to consider.\\
In the literature, there are many criterion that may detect the entanglement. The first criteria for the detection of entanglement was introduced by Peres and Horodecki \cite{peres_1996, Horodecki_1996}, which is based on the partial transposition operation. The Peres-Horodecki entanglement detection criterion states that if all the eigenvalues of the partial transposed matrix are positive then the corresponding bipartite quantum states are separable. On the other hand, if there exist at least one negative eigenvalue of the partial transposed matrix then the corresponding state is detected as an entangled state. Although this criterion is necessary and sufficient for $2\otimes 2$ and $ 2\otimes 3$ dimensional systems but it provides only necessary condition for the separability of a bipartite state in a higher dimensional system. The Peres-Horodecki criterion is also known as Positive Partial Transposition (PPT) criterion. We can also recall here that the partial transposition map is not a 2-positive map and thus it does not represent a completely positive map. Hence, it may not be implementable in an experiment although partial transposition map plays a vital role in the detection of entanglement. Therefore, this may be a major drawback of the partial transposition map in the sense of practical implementation. On the contrary, it is well known that positive but not completely positive (PNCP) maps may be considered as the strong detector of entanglement \cite{rohira_2021,scala_2024}. Range criterion \cite{Horodecki_1997} is another entanglement detection criterion that has additional power of detecting the positive partial transpose entangled states (also known as bound entangled states) but on the other hand Peres-Horodecki criterion is lacking this power of the detection of the bound entangled states. Although the Range criterion is powerful but it is very difficult to apply this criterion for higher dimensional systems as it requires spanning set of product states, which is not algorithmically efficient for large systems \cite{R Kumar_2026}. There also exist other entanglement detection criterion such as Reduction criterion \cite{Horodecki_1999}, Majorization criterion \cite{Nielsen_2001} and Realignment criterion \cite{Rudolph_2003,Chen_2002}.\\
All the above mentioned entanglement detection criterion are very powerful but their implementation in the laboratory are very less probable. Therefore, it will be very legitimate to have some entanglement detection criterion that may work in a realistic scenario. Thus, our search for the hermitian operator that may detect entangled state and which may be realizable in an experiment, is fulfilled by the entanglement witness operator. In many experiments, we find that  entanglement witnesses (EWs) are used for entanglement detection \cite{Terhal_2001,Horodecki_2009,Vogel_2009}. This task can be achieved by defining two inequalities on the expectation value of EW operator: first inequality holds for all separables states and the second inequality hold for at least one entangled state. In the language of mathematical inequality, a hermitian operator $W$ is said to be an EW operator if the following two inequalities holds:
\begin{eqnarray}
&&(i) Tr(W\rho_{sep})\geq 0,~\forall~~ \text{separable state}~ \rho_{sep}\nonumber\\&&
(ii) Tr(W\rho_{ent})< 0, \text{for at least one entangled state}~ \rho_{ent}\nonumber\\
\label{witdef}
\end{eqnarray}
There always exist atleast one EW operator that may detect an entangled state \cite{Horodecki_1996} but the problem lies in the construction of it. Therefore, the construction of an EW operator for a given entangled state is a challenging task. Also, determining EWs for all entangled states is known to be NP hard problem \cite{Gurvits_2004,Doherty_2004,Hou_2010}. Threafter, we will use the term linear entanglement witness (LEW) to represent a conventional EW operators.\\
Despite of going through many problems, a significant progress has been achieved in constructing LEWs \cite{Vogel_2009,Doherty_2004,Hou_2010,Chruscinski_2009}.
Although LEWs may play a vital role in the detection of entanglement but the performance of them is not satisfactory. They may either fail to detect many useful entangled states or may detect a few entangled states from a given family of states. Therefore, it is natural to ask for the existence of a non-linear entanglement witness (NLEW) operator for entanglement detection that may outperform some existing LEW? The answer of this question is in affirmative. It was found in a few works that the introduction of a non-linear term in entanglement detection criteria works better than LEW \cite{Duan_2000,Uffink_2002,Hoffmann_2003,Guhne_2004,Guhne_2006}. Therefore,
the problem of entanglement detection may be tackled through NLEW, which may throw some new insights on entanglement detection problem. Generally, we expect that NLEW detects larger class of entangled state when compared to LEW. This is due to the geometry of the expectation value of the witness operator with respect to all quantum state in a given dimension. The former operator i.e. NLEW represent a curved surface whereas the latter operator i.e. LEW representing a hyperplane \cite{Karimipour_2025}. Therefore, it would be interesting to find the analytical structure of NLEW, which may efficiently detect a family of entangled states in comparison to LEW.\\
We are now in a position to list the following points that strongly motivates us to do this work:\\
(M1) The construction of either any LEW or any NLEW, is not a trivial task instead of having a prior partial information about the state under investigation.\\
(M2) Even if we are able to construct a LEW denoted by $W_{L}$, it may happen that the constructed LEW fail to detect a full family or some member of the family of a bipartite mixed entangled state described by the density operator $\rho_{ent}$. That is, the linear witness operator $W_{L}$ satisfies the inequality  $Tr(W_L\rho_{ent})\geq 0$ for some entangled state $\rho_{ent}$ belonging to family of states. If this is the case then we have to search for a new linear entanglement witness operator $W_{L}^{new}$ that may detect more entangled state in the family of  $\rho_{ent}$ than $W_{L}$ but the construction of $W_{L}^{new}$ is not a trivial task.\\
Therefore, by seeing the above points, we would like to develop a systematic procedure for the construction of a NLEW using a given LEW, which may detect more entangled states than the given LEW operator.\\   
The work can be divided into the following sections. In section II, we have noted down the previous results that would be needed in the subsequent sections. Section III discusses the general form of NLEW in terms of a given LEW. In section IV, we have studied the construction of the NLEW operators to detect bipartite NPTES lying in a $d_1\otimes d_2$ dimensional  system. Also, we have discussed the efficiency of the constructed NLEW over LEW using a few examples. 
In section V, firstly, we have constructed LEW operator from the well known computable cross-norm realignment (CCNR) and De vicente (DV) entanglement detection criterion and then using it, the NLEW operators have been constructed. In section VI, we have provided a new construction of NLEW operators for the detection of PPTES. In section VII, we analytically discuss the exprimental realization of the constructed NLEW. Section VIII concludes the whole work.
\section{Some Definitions and Established Results}
\textbf{Result-1 \cite{Lasser_1995}:} If the operators $A$ and $B$ are hermitian then the following inequality holds 
\begin{eqnarray}
\lambda_{min}(A)Tr(B)\leq Tr(AB)\leq \lambda_{max}(A)Tr(B)		
\label{result-1}
\end{eqnarray}
\textbf{Result-2 \cite{Lin_2016,Zhang_2019}:} For a density operator $\rho_{AB}$ in $d_{1}\otimes d_{2}$ dimensional system, if $\rho_{AB}^{T_{B}}\geq0$ i.e. if the state is a PPT state, then the following inequalities hold
\begin{equation}
det(I_{d_{2}}+Tr_{A}(\rho_{AB}))\leq det(I_{d_{1}d_{2}}+\rho_{AB})
\label{det 1st identity}
\end{equation}
\begin{equation}
det(I_{d_{1}}+Tr_{B}(\rho_{AB}))\leq det(I_{d_{1}d_{2}}+\rho_{AB})
\label{det 2nd identity}
\end{equation}
Where $Tr_{A}(\rho_{AB})$ and $Tr_{B}(\rho_{AB})$ symbolize the partial trace of the state $\rho_{AB}$  with respect to the subsystems $A$ and $B$ respectively. $I_{d}$ represents the identity matrix of order $d\times d$ while $det$ stands for the matrix determinant operation.\\
\textbf{Result-3 \cite{Lin_2016}:} If a density operator $\rho_{AB}$ in a $d_{1}\otimes d_{2}$ dimensional system is PPT then both the operators given below are positive semidefinite i.e.
	\begin{equation}
		Tr_{B}(\rho_{AB})\otimes I_{d_{2}}-\rho_{AB} \geq 0
		\label{1st operarator}
	\end{equation} 
	\begin{equation}
		I_{d_{1}}\otimes Tr_{A}(\rho_{AB})-\rho_{AB} \geq 0
		\label{2nd operator}
	\end{equation}
	where $Tr_{A}(\rho_{AB})$
	and $Tr_{B}(\rho_{AB})$ are defined as the partial trace of state $\rho_{AB}$  with respect to the subsystems $A$ and $B$ and $I_{d}$ denote the identity matrix of order $d\times d$.\\
\textbf{Result-4 \cite{Lin_2016}:} If $\rho_{AB}\in H_{d_{1}\otimes d_{2}}$ is positive semidefinite then 
\begin{equation}
\Big(\frac{det(Tr_A({\rho_{AB}}))}{d_1}\Big)^{d_1}\geq det(\rho_{AB})
\label{det inequality from lin}
\end{equation}
where $H_{d_{1}\otimes d_{2}}$ denote the Hilbert space of dimension $d_{1}d_{2}$.\\
\textbf{Result-5 \cite{Meenakshi_1999}:} If $A$ and $B$ represents two positive semidefinite matrices then eigenvalues of $AB$ are non-negative i.e\\
\begin{equation}
AB\geq0
\label{ev identity}
\end{equation}
\textbf{Result-6 \cite{Zhan_2002}:} If $M_n$ denote the set of all  $n\times n$ matrices and $A\in M_n$ then we have the following inequality
\begin{equation}
|Tr(A)|\leq\sum_{i=1}^{n}\sigma_{i}(A)=||A||_1
\label{traceA less norm A}
\end{equation}
where $\sigma_{i}(A)$ symbolizes the $i^{th}$ singular value of matrix $A$ and $||A||_1$ denote the trace norm of the matrix $A$.

\section{Functional form of non-linear witness operator in terms of a given linear witness operator: A Brief Discussion}
\noindent Many linear witness operators have already been existed in the literature but a few non-linear witness operator have been constructed for the detection of an entangled state. Thus, in this section, we discuss how one can think of the construction of a non-linear witness operator from a given linear witness operator.\\ 
To start with, let us consider a non-linear operator $W_{NL}$, which is a function of a given linear operator $W_{L}$. Mathematically, it can be expressed as 
\begin{eqnarray}
	W_{NL}=f(W_{L})
\end{eqnarray}
where $f(W_{L})$ is any non-linear function of $W_{L}$. Specifically, we can consider $f(W_{L})$ as the polynomial function of $W_{L}$. Further, if the linear operator $W_{L}$ satisfies (\ref{witdef}) then we say that $W_{L}$ is a linear entanglement witness operator that may detect an entangled state lying in $d_{1}\otimes d_{2}$ dimensional system. Additionally, if $f(W_{L})$ also satisfies (\ref{witdef}) for a given $d_{1} \otimes d_{2}$ dimensional LEW operator $W_{L}$ then we call $f(W_{L})$, a non-linear entanglement witness operator. Now, our task is to find such a polynomial functional form of the non-linear entanglement witness operator $f(W_{L})$ that may detect more entangled states than the LEW operator $W_{L}$.\\
Let us begin our search for the polynomial functional form of $f(W_{L})$ by using trial and error method. Consider the functional form of $f(W_{L})$ for $d_{1}\otimes d_{2}$ dimensional system as
\begin{align}
f(W_{L})=W_{L}^2+W_{L}
\label{nonlinear1}
\end{align}  
where $W_{L}$ denotes LEW operator in the same dimension.
Clearly, $f({W_{L}})$ given in (\ref{nonlinear1}) represent a non-linear function of $W_{L}$ but it is not yet known that whether $f(W_{L})$ also satisfies the properties of the witness operator given in (\ref{witdef}). Therefore, to investigate the entanglement detection property of $f(W_{L})$, we consider an arbitrary bipartite separable state  described by the density operator $\rho_{sep}$. Therefore, for  any $d_{1} \otimes d_{2}$ dimensional separable state $\rho_{sep}$, we have  
\begin{eqnarray}
Tr(	f(W_{L})\rho_{sep})&=&Tr(W_{L}^2\rho_{sep})+Tr(W_{L}\rho_{sep})\nonumber\\ &\geq& Tr(W_{L}^2\rho_{sep})\nonumber\\ &\geq& \lambda_{min}[{W_{L}^{2}}]\nonumber\\ &\geq& 0
\label{cond1}
\end{eqnarray}
In the second line, we have used the fact that $W_{L}$ is a LEW operator and thus $Tr(W_{L}\rho_{sep}) \geq 0$ and the third line uses the Result-1. Now, it remains to show that for a certain given $W_{L}$, there exist an entangled state $\rho_{ent}$ for which $Tr(f(W_{L})\rho_{ent})< 0$. In particular, if we consider a LEW operator $W_{L}^{(\psi^-)}$ acting on $2\otimes 2$ dimensional system then it can be expressed as \cite{Guhne_2009}
\begin{align}
	W_{L}^{(\psi^-)}=(|\psi^{-}\rangle_{AB}\langle \psi^{-}|)^{T_{B}}=\frac{1}{2}\begin{pmatrix}
		0&0&0&-1\\
		0&1&0&0\\
		0&0&1&0\\
		-1&0&0&0
	\end{pmatrix}
\end{align}
where $|\psi^{-}\rangle_{AB}=\frac{1}{\sqrt(2)}(|01\rangle_{AB}-|10\rangle_{AB})$ and $T_{B}$ denote the partial transposition with respect to the subsystem $B$. Moreover, it can be shown that the witness operator $W_{L}^{(\psi^-)}$ detected the entangled state described by the density operator $\rho_{ent}$, which is given by
\begin{eqnarray}
	\rho_{ent}=\begin{pmatrix}
		\frac{13}{30}&0&0&\frac{11}{30}\\
		0&\frac{1}{15}&0&0\\
		0&0&\frac{1}{15}&0\\
		\frac{11}{30}&0&0&\frac{13}{30}
	\end{pmatrix}
	\label{rho_entcons.}
\end{eqnarray}  
Therefore, for a given $W_{L}^{(\psi^-)}$, an entangled state $\rho_{ent}$ and for a  non-linear operator given in (\ref{nonlinear1}), we find that
\begin{eqnarray} 
Tr[f(W_{L}^{(\psi^-)})\rho_{ent}]<0
\end{eqnarray}
Thus, it shows that $f(W_{L}^{(\psi^{-})})$ is a well defined non-linear entanglement witness operator for a given linear entanglement witness operator $W_{L}^{(\psi^{-})}$.\\
However, we find that neither $W_{L}^{(\psi^{-})}$ nor $f(W_{L}^{(\psi^{-})})$ detect any member of a large class of a two-parameter family of $2\otimes2$ NPTES represented by the density matrix $\rho_{s,t}$, which is given by \cite{Rudolph_2003}.
\begin{align}
	\rho_{s,t}=\begin{pmatrix}
		\frac{5}{8}&0&0&\frac{t}{2}\\
		0&0&0&0\\
		0&0&\frac{1}{2}(s-\frac{1}{4})&0\\
		\frac{t}{2}&0&0&\frac{1-s}{2}
	\end{pmatrix}
\end{align} 
where $s\in [0.2926,0.3]$ and $t\in [0.02,0.0213]$.\\
The above discussion motivates us to construct an efficient NLEW operator to detect $d_{1} \otimes d_{2}$ dimensional entangled quantum state. By efficient NLEW operator, we mean to say that  $W_{NL}$ must be constructed using a linear witness operator $W_{L}$ in such a way that $W_{NL}$ detect larger set of entangled states in comparison to $W_{L}$.\\
Therefore, in the running section, we have provided a general idea of constructing a NLEW operator from a given LEW operator to detect the $d_{1}\otimes d_{2}$ dimensional entangled quantum states.\\  

\section{Non-Linear Entanglement Witness Operator to detect NPTES in $d_{1}\otimes d_{2}$ dimensional system}  
In this section, we provide an explicit form of NLEW operator expressed in terms of a non-linear function of a given LEW operator to detect $d_{1}\otimes d_{2}$ dimensional NPTES. We will also discuss the efficiency of the constructed NLEW operator over the LEW operator, from which it has been constructed.\\
Let us consider a LEW operator $W_{L}$, an entangled state $\rho_{ent}$ and a separable state $\rho_{sep}$. Any non-linear function of $W_{L}$ denoted by $W_{NL}$ is said to be a NLEW operator if it satisfies the following conditions:
\begin{eqnarray}
	&&\textbf{C1.}~~ Tr(W_{NL}\rho_{sep})\geq 0,~\forall~  \rho_{sep}\nonumber\\&&
	\textbf{C2.}~~  Tr(W_{NL}\rho_{ent})< 0, \text{for at least one}~ \rho_{ent}\nonumber\\&&
	\label{nlwitdef}
\end{eqnarray}
\subsection{Construction of Non-Linear entanglement witness operator to detect NPTES}
\noindent Our task is now to construct a NLEW operator using a given LEW operator. The systematic way of construction is given in the theorem below:
\begin{theorem}
If $W_L$ is a linear entanglement witness operator  in $d_{1}\otimes d_{2} $ dimensional system and $\rho_{AB}$ denoting the state under probe, then the non-linear witness operator $W_{NL}^{(1)}$ may take the form as
\begin{equation}
	\begin{split}
		W_{NL}^{(1)}&=\frac{1}{d_{1}d_{2}}W_L^{2}+\frac{d_{1}d_{2}}{d_{1}d_{2}+1}W_{L}-\\
		&\big(det(I_{d_{2}}+Tr_{A}(\rho_{AB}))-det(I_{d_{1}d_{2}}+\rho_{AB})\big)I_{d_{1}d_{2}}
	\end{split}
	\label{WNL1}
\end{equation} 
where $Tr_{A}(\rho_{AB})$ is partial trace of the state $\rho_{AB}$  with respect to the subsystem $A$, $I_{d_{1}d_2}$ is the identity matrix of order $d_1\times d_2$ and $det (.)$ denote the determinant of the matrix $(.)$.
\end{theorem}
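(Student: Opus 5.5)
We have to show that $W_{NL}^{(1)}$ meets condition \textbf{C1} of (\ref{nlwitdef}): $Tr(W_{NL}^{(1)}\rho_{sep})\geq 0$ for every separable $\rho_{sep}$ in $d_1\otimes d_2$. Condition \textbf{C2} is existential, and we would settle it later by exhibiting an explicit NPT state. The plan is to take the expectation of (\ref{WNL1}) in the state under probe $\rho_{AB}=\rho_{sep}$ and bound the three pieces separately, using $Tr(\rho_{sep})=1$. This gives
\begin{equation*}
Tr(W_{NL}^{(1)}\rho_{sep})=\frac{1}{d_1d_2}Tr(W_L^2\rho_{sep})+\frac{d_1d_2}{d_1d_2+1}Tr(W_L\rho_{sep})+\big(det(I_{d_1d_2}+\rho_{sep})-det(I_{d_2}+Tr_A(\rho_{sep}))\big).
\end{equation*}

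We handle the three terms one at a time.
\begin{itemize}
\item The second term is nonnegative because $W_L$ is an LEW and the positive prefactor $\frac{d_1d_2}{d_1d_2+1}$ does not change the sign.
\item For the first term we follow the argument of (\ref{cond1}). Since $W_L$ is Hermitian, $W_L^2\geq 0$, and Result-1 gives $Tr(W_L^2\rho_{sep})\geq\lambda_{min}(W_L^2)Tr(\rho_{sep})\geq 0$. One could equally cite Result-5, or the positivity of $Tr$ of a product of two positive semidefinite matrices.
\item The third term is where Result-2 enters. Every separable state is PPT, since $\rho_{sep}^{T_B}=\sum_i p_i\rho_i^A\otimes(\rho_i^B)^T\geq 0$. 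Inequality (\ref{det 1st identity}) then gives $det(I_{d_1d_2}+\rho_{sep})-det(I_{d_2}+Tr_A(\rho_{sep}))\geq 0$.
\end{itemize}
Adding the three bounds proves \textbf{C1}. The argument in fact shows more: $Tr(W_{NL}^{(1)}\rho)\geq 0$ for all PPT states, so this witness can only ever detect NPT states. This is consistent with the section heading.

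For \textbf{C2} we would choose a concrete LEW, for instance $W_L^{(\psi^-)}$ in $2\otimes 2$, together with an NPT state such as $\rho_{ent}$ of (\ref{rho_entcons.}) or a member of $\rho_{s,t}$. For that choice we compute $Tr(W_L\rho)$, $Tr(W_L^2\rho)$, $det(I_4+\rho)$ and $det(I_2+Tr_A\rho)$ explicitly and check that the total is negative. The mechanism that should produce detection is the determinant gap $det(I_{d_2}+Tr_A\rho)-det(I_{d_1d_2}+\rho)$. For NPT states with large entanglement this gap can be positive and can outweigh the nonnegative quadratic term, which is exactly what lets $W_{NL}^{(1)}$ catch states that $W_L$ itself misses.

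The main obstacle is \textbf{C2}, not \textbf{C1}. The quadratic term $\frac{1}{d_1d_2}Tr(W_L^2\rho)$ always pushes upward, so we need a state where the linear term and the determinant gap together beat it. The weights $\frac{1}{d_1d_2}$ and $\frac{d_1d_2}{d_1d_2+1}$ look tuned to make this possible. We would first test the maximally entangled state: there $Tr_A\rho=I/d_2$ gives a large value of $det(I_{d_2}+Tr_A\rho)$, while $\rho$ is rank one, so $det(I+\rho)=2$. If the maximally entangled state works we would then move to the mixed families numerically.
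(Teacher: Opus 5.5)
Your plan is essentially the paper's proof, but one side remark is wrong and the \textbf{C2} step is left unexecuted.

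\textbf{C1.} Your argument is the paper's own. It uses (F1) the LEW property for the linear term, (F2) $Tr(W_L^2\rho_{sep})\geq 0$, and (F3) Result-2, which applies because separable states are PPT. The paper cites Result-5 for (F2); your Result-1 route works equally well.

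Your claim that the argument ``shows more'' does not follow. It does not give $Tr(W_{NL}^{(1)}\rho)\geq 0$ for all PPT $\rho$. The linear term is controlled only by the LEW property, that is, only on separable states. If $W_L$ is non-decomposable, $Tr(W_L\rho)$ can be negative on a PPT entangled state, and that step fails. The extension holds only for decomposable $W_L$. This is exactly the caveat the paper records at the start of Section VI. In $2\otimes 2$ and $2\otimes 3$ your remark is vacuous, since PPT states are separable there.

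\textbf{C2.} You give only a plan, but it is the one the paper carries out. The paper evaluates on $\ket{\phi^+}$, where the determinant gap is $\frac{9}{4}-2=\frac{1}{4}$. It uses the Milne witness $W_L^p$, for which $Tr(W_L^p\ket{\phi^+}\bra{\phi^+})=\frac{p}{2}$ and $Tr((W_L^p)^2\ket{\phi^+}\bra{\phi^+})=\frac{p^2}{4}$. The expectation is then $\frac{1}{80}(5p^2+32p-20)<0$ for $0<p\leq 0.573$.

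Your $W_L^{(\psi^-)}$ would also close \textbf{C2}. Since $(W_L^{(\psi^-)})^2=\frac{1}{4}I_4$, you get $\frac{1}{16}-\frac{2}{5}-\frac{1}{4}=-\frac{47}{80}$ on $\ket{\phi^+}$ and about $-0.24$ on $\rho_{ent}$. However, $W_L^{(\psi^-)}$ already detects both states, giving $-\frac{1}{2}$ and $-\frac{3}{10}$ respectively. Such an example therefore shows nothing beyond the LEW. The paper picks $W_L^p$ precisely because $Tr(W_L^p\ket{\phi^+}\bra{\phi^+})=\frac{p}{2}>0$.

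Your other candidate, $\rho_{s,t}$, would fail with $W_L^{(\psi^-)}$. There the linear term is nonnegative and the quadratic term is $\frac{1}{16}$. The determinant gap is negative (roughly $-0.02$), so it adds a positive amount, and the expectation is positive.
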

\begin{proof}
To prove $W_{NL}^{(1)}$ given in (\ref{WNL1}), a witness operator, we need to show the two conditions given in (\ref{nlwitdef}).\\
(i) Let $\rho_{AB}^{sep}$ be any arbitrary separable state in $d_1\otimes d_2$ dimensional system. Then the expectation value of the operator $W_{NL}^{(1)}$ with respect to $\rho_{AB}^{sep}$ is given by 
\begin{equation}
\begin{split}
Tr(W_{NL}^{(1)}\rho_{AB}^{sep})&=\frac{1}{d_{1}d_{2}}Tr(W_L^{2}\rho_{AB}^{sep})+\frac{d_{1}d_{2}}{d_{1}d_{2}+1}Tr(W_L\rho_{AB}^{sep})\\
&-(det(I_{d_{2}}+Tr_{A}(\rho_{AB}^{sep}))-det(I_{d_{1}d_{2}}+\rho_{AB}^{sep}))
\label{wnlpf}
\end{split}
\end{equation}
In the expression (\ref{wnlpf}), we can use the following facts:\\
\textbf{(F1)} Since $W_L$ is an entanglement witness operator so $Tr(W_{L}\rho_{AB}^{sep})\geq 0$.\\
\textbf{(F2)} Since the operators $W_{L}^{2}$ and $\rho_{AB}^{sep}$ are positive semidefinite operators so using the Result-5, we get $Tr(W_L^{2}\rho_{AB}^{sep}) \geq 0$.\\
\textbf{(F3)} By using the first part of Result-2 given in (\ref{det 1st identity}), we have
\begin{equation}
	\begin{split}
		k=det(I_{d_{2}}+Tr_{A}(\rho_{AB}^{sep}))-det(I_{d_{1}d_{2}}+\rho_{AB}^{sep})\leq 0
	\end{split}
	\label{det k constant}
\end{equation}
Using the facts \textbf{(F1)}, \textbf{(F2)}, \textbf{(F3)}  on any separable state $\rho_{AB}^{sep}$, the expression (\ref{wnlpf}) reduces to the following inequality
\begin{equation}
Tr(W_{NL}^{(1)}\rho_{AB}^{sep})\geq0,~~~ \forall \rho_{AB}^{sep}
\label{cond 1}
\end{equation}
(ii) The next task is to show that there exist an entangled state that may be detected by $W_{NL}^{(1)}$. To achieve this, let us consider an entangled state  $\ket{\phi^{+}}_{AB}=\frac{1}{\sqrt{2}}(\ket{00}+\ket{11})$ and a LEW operator $W_{L}^{p}$ \cite{Milne_2015}, which is given by
\begin{align}
W_{L}^{p}=\frac{1}{2}\begin{pmatrix}
p&0&0&0\\
0&1-p&1&0\\
0&1&1-p&0\\
0&0&0&p
\end{pmatrix}, \text{where}~~0<p\leq1
\label{WLP WO}
\end{align}
It may be easily verified that $W_{L}^{p}$ is a linear witness operator and it does not detect the state $\ket{\phi^{+}}_{AB}$ for any value of the parameter $0<p\leq 1$.\\
If $\rho_{\ket{\phi^{+}}_{AB}\bra{\phi^{+}}}$ denote the density operator corresponding to the state $\ket{\phi^{+}}_{AB}$ then the expectation value of $W_{NL}^{(1)}$ with respect to the state $\rho_{\ket{\phi^{+}}_{AB}\bra{\phi^{+}}}$ is given by
\begin{align}
\begin{split}
&Tr(W_{NL}^{(1)}\rho_{\ket{\phi^{+}}_{AB}\bra{\phi^{+}}})=\frac{1}{4}Tr((W_{L}^{p})^{2}\rho_{\ket{\phi^{+}}_{AB}\bra{\phi^{+}}})+\\&
\frac{4}{5}Tr(W_{L}^{p}\rho_{\ket{\phi^{+}}_{AB}\bra{\phi^{+}}})-[det(I_{2}+Tr_{A}(\rho_{\ket{\phi^{+}}_{AB}\bra{\phi^{+}}}))\\&-det(I_{4}+\rho_{\ket{\phi^{+}}_{AB}\bra{\phi^{+}}})]
\end{split}
\label{wnl1ent}
\end{align}
A simple calculation gives the following values  
\begin{equation}
\begin{split}
Tr((W_{L}^{p})^{2}\rho_{\ket{\phi^{+}}_{AB}\bra{\phi^{+}}})&=\frac{p^2}{4}\\
Tr(W_{L}^{p}\rho_{\ket{\phi^{+}}_{AB}\bra{\phi^{+}}})&=\frac{p}{2}\\
det(I_4+\rho_{\ket{\phi^{+}}_{AB}\bra{\phi^{+}}})&=2\\
det(I_2+Tr_{A}(\rho_{\ket{\phi^{+}}_{AB}\bra{\phi^{+}}}))&=\frac{9}{4}
\label{expressval}
\end{split}
\end{equation}
Inserting the values given in (\ref{expressval}), the equation  (\ref{wnl1ent}) reduces to
\begin{align}
\begin{split}
Tr(W_{NL}^{(1)}\rho_{\ket{\phi^{+}}\bra{\phi^{+}}})&=\frac{1}{80}(-20+p(32+5p))\\
&<0,~~~0<p\leq0.573
\end{split}
\label{cond2}
\end{align}\\
Combining the conditions given in (\ref{cond 1}), (\ref{cond2}) and using the fact that $W_{NL}^{(1)}$, a non-linear function of $W_{L}$, we can now infer that the operator $W_{NL}^{(1)}$ given in (\ref{WNL1}) is a valid NLEW operator, which may detect a bipartite $d_{1}\otimes d_{2}$ dimensional NPTES.
\end{proof}
\begin{theorem}
If $W_L$ is a linear entanglement witness operator  in $d_1\otimes d_2$ dimensional system, the density operator $\rho_{AB}$ represents a state under investigation and $Tr_{A}(\rho_{AB})$ denotes the partial trace of state $\rho_{AB}$ with respect to the subsystem $A$ then the non-linear operator 
\begin{equation}
\begin{split}
W_{NL}^{(2)}&=W_L^{2}+(d_1d_2)^2W_L-\frac{1}{d_{2}}\big(det(I_{d_{2}}+Tr_{A}(\rho_{AB}))-\\
&det(I_{d_{1}d_{2}}+\rho_{AB})\big)I_{d_{1}d_{2}}+d_{2}(I_{d_{1}}\otimes Tr_{A}(\rho_{AB})-\rho_{AB})
\end{split}
\label{NEW W0 3*3}
\end{equation}
represent a NLEW operator, where $(I_{d_{1}}\otimes Tr_{A}(\rho_{AB})-\rho_{AB})$ denote a positive semidefinite operator.
\end{theorem}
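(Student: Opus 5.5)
The plan is to verify the two defining conditions \textbf{C1} and \textbf{C2} in (\ref{nlwitdef}), following the proof of Theorem~1 but adding the new term $d_{2}(I_{d_{1}}\otimes Tr_{A}(\rho_{AB})-\rho_{AB})$.

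For \textbf{C1}, take an arbitrary separable $\rho_{AB}^{sep}$ and expand $Tr(W_{NL}^{(2)}\rho_{AB}^{sep})$ into four pieces, showing each is non-negative.
\begin{itemize}
\item $Tr(W_L^2\rho_{AB}^{sep})\geq 0$, by Result-5 as in \textbf{(F2)}.
\item $(d_1d_2)^2 Tr(W_L\rho_{AB}^{sep})\geq 0$, since $W_L$ is a witness, as in \textbf{(F1)}.
\item $-\frac{1}{d_2}k\geq 0$. Here $k=det(I_{d_{2}}+Tr_{A}(\rho_{AB}^{sep}))-det(I_{d_{1}d_{2}}+\rho_{AB}^{sep})\leq 0$ by Result-2, exactly as in \textbf{(F3)}, because every separable state is PPT.
\item $d_2\,Tr\big[(I_{d_{1}}\otimes Tr_{A}(\rho_{AB}^{sep})-\rho_{AB}^{sep})\rho_{AB}^{sep}\big]\geq 0$. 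The first factor is positive semidefinite by Result-3 (\ref{2nd operator}), again because separable states are PPT. The second factor $\rho_{AB}^{sep}$ is positive semidefinite. So by Result-5 the product has non-negative eigenvalues, hence non-negative trace.
\end{itemize}
Summing the four pieces gives $Tr(W_{NL}^{(2)}\rho_{AB}^{sep})\geq 0$.

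It helps to note that the last term simplifies to $d_2\big(Tr[(Tr_A\rho_{AB})^2]-Tr(\rho_{AB}^2)\big)$. This is the difference of the purities of the reduced and the global state. For pure entangled states it is strictly negative, and this is what should drive detection.

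For \textbf{C2}, reuse the example from Theorem~1: the $2\otimes 2$ witness $W_L^{p}$ of (\ref{WLP WO}) and $\ket{\phi^{+}}_{AB}$, which $W_L^p$ alone does not detect. From (\ref{expressval}) the inputs are
\[
Tr((W_L^p)^2\rho)=\frac{p^2}{4},\qquad Tr(W_L^p\rho)=\frac{p}{2},\qquad k=\frac{9}{4}-2=\frac{1}{4}.
\]
For the new term, $Tr_A\rho=I_2/2$, so its purity is $1/2$, while $Tr(\rho^2)=1$. Hence the new term contributes $2(1/2-1)=-1$. Altogether
\[
Tr(W_{NL}^{(2)}\rho_{\ket{\phi^{+}}\bra{\phi^{+}}})=\frac{p^2}{4}+8p-\frac{1}{8}-1=\frac{1}{8}\left(2p^2+64p-9\right),
\]
which is negative for $0<p<\frac{-32+\sqrt{1042}}{2}\approx 0.14$. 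So $W_{NL}^{(2)}$ detects $\ket{\phi^{+}}$ for this range of $p$.

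The expected obstacle is the last term in \textbf{C1}. Its positivity relies on the reduction-type operator being positive semidefinite, and Result-3 guarantees this only for PPT states. One must therefore explicitly use the inclusion of separable states in PPT states. One must also justify passing from "eigenvalues of $AB$ are non-negative" (Result-5) to a non-negative trace, which holds because the trace is the sum of those eigenvalues. A secondary concern is that the large coefficient $(d_1d_2)^2$ shrinks the detection window in $p$. This is harmless, since \textbf{C2} needs only one detected entangled state.
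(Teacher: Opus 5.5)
Your proof is correct and follows the paper's argument. Condition C1 is checked term by term exactly as in the paper: the witness property, Result-5 for $W_L^2$, Result-2 for the determinant difference, and Result-3 together with Result-5 for the reduction-type term, each using that separable states are PPT. For C2 the paper uses the two-qubit family $\rho_a$ rather than $\ket{\phi^{+}}$, but at $a=1$ (where $\rho_a=\ket{\phi^{-}}\bra{\phi^{-}}$) its expression $\frac{1}{8}\big(68+2p^2-50a-27a^2+4p(-17+33a)\big)$ reduces exactly to your $\frac{1}{8}(2p^2+64p-9)$, so your example is essentially the paper's computation at the endpoint, and your purity rewriting of the last term is a convenient shortcut for it.
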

\begin{proof}
It is evident from the construction that  $W_{NL}^{(2)}$ represent a non-linear operator. The task remain to show whether $W_{NL}^{(2)}$ also satisfies the conditions $C1$ and $C2$.\\
\textbf{(i) Condition C1:} If $\rho_{AB}^{sep}$ denote an arbitrary bipartite separable state in $d_1\otimes d_2$ dimensional  system then the expectation value of the non-linear operator $W_{NL}^{(2)}$ can be calculated as 
\begin{equation}
\begin{split}
Tr(W_{NL}^{(2)}\rho_{AB}^{sep})&=Tr(W_L^{2}\rho_{AB}^{sep})+(d_1d_2)^2Tr(W_L\rho_{AB}^{sep})-\\
&\frac{1}{d_{2}}\big(det(I_{d_{2}}+Tr_{A}(\rho_{AB}^{sep}))-\\
&det(I_{d_{1}d_{2}}+\rho_{AB}^{sep})\big)+\\
&d_{2}Tr((I_{d_{1}}\otimes Tr_{A}(\rho_{AB}^{sep})-\rho_{AB}^{sep})\rho_{AB}^{sep})
\end{split}
\label{NEW W0 3*3 1}
\end{equation}
From (\ref{2nd operator}), it is known that $ I_{d_{1}}\otimes Tr_{A}(\rho_{AB}^{sep})-\rho_{AB}^{sep}$ is a positive semidefinite operator and a separable state described by the density matrix $\rho_{AB}^{sep}$ also represents a positive semidefinite operator, therefore by using (\ref{ev identity}), we can say that $ (I_{d_{1}}\otimes Tr_{A}(\rho_{AB}^{sep})-\rho_{AB}^{sep})\rho_{AB}^{sep}\geq 0$ i.e. the eigenvalues of the operator $(I_{d_{1}}\otimes Tr_{A}(\rho_{AB}^{sep})-\rho_{AB}^{sep})\rho_{AB}^{sep}$ are non- negative. Since $Tr(W_{L}\rho_{AB}^{sep})\geq0$, $Tr(W_{L}^{(2)}\rho_{AB}^{sep})\geq 0$, $det(I_{d_{2}}+Tr_{A}(\rho_{AB}^{sep}))-det(I_{d_{1}d_{2}}+\rho_{AB}^{sep})<0$ for PPT states and $ Tr((I_{d_{1}}\otimes Tr_{A}(\rho_{AB}^{sep})-\rho_{AB}^{sep})\rho_{sep})\geq 0$, so we have
\begin{equation}
\begin{split}
Tr(W_{NL}^{(2)}\rho_{AB}^{sep})&\geq 0,~~~~\forall \rho_{AB}^{sep}
\label{wnl2sep}
\end{split}
\end{equation}
\textbf{(i) Condition C2:} To show that there exist atleast one entangled state, which is detected by $W_{NL}^{(2)}$, we recall the witness operator $W_L^{p}$ defined in (\ref{WLP WO}) and consider a two qubit state defined as
	\begin{equation}
		\rho_{a}=\begin{pmatrix}
		\frac{a}{2}&0&0&-\frac{a}{2}\\
		0&1-a&0&0\\
		0&0&0&0\\
		-\frac{a}{2}&0&0&\frac{a}{2}
	\end{pmatrix}, 0\leq a \leq 1
	\label{rho a state}
	\end{equation}
It may be noted that the state $\rho_{a}$ is entangled when $a\in (0,1]$ and can be verified that the witness operator $W_L^{p}$ can not detect any state from the class of entangled state described by the density operator $\rho_{a}$. In order to find the expectation value of $W_{NL}^{(2)}$ with respect to the state $\rho_{a}$, we require the values of some quantities which are given below
	\begin{equation}
		\begin{split}
			Tr(W_{L}^{p}\rho_{a})&=\frac{1}{2}(1-a+p(-1+2a))\\
			Tr((W_{L}^{p})^{2}\rho_{a})&=\frac{1}{4}(2+p^2+2p(-1+a)-2a)\\
			det(I_4+\rho_{a})&=2+a-a^2\\
			det(I_2+Tr_{A}(\rho_{a}))&=\frac{1}{4}(8+2a-a^2)
		\end{split}
		\label{rhoa quantities}
	\end{equation}
	Using (\ref{rhoa quantities}), the expectation value of $W_{NL}^{(2)}$ with respect to the state $\rho_{a}$ can be calculated as
	\begin{equation}
		\begin{split}
		Tr(W_{NL}^{(2)}\rho_a)&=\frac{1}{8}(68+2p^2-50a-27a^2+4p(-17+33a))\\
		&<0, ~~0.917\leq a\leq1,~~0<p\leq 0.01044
		\end{split}
	\end{equation}	
	Thus, the non-linear operator $W_{NL}^{(2)}$ satisfies all criterion of an entanglement witness operator. Hence, $W_{NL}^{(2)}$ is proved to be a legitimate NLEW operator.
\end{proof}
\begin{theorem}
If $W_{L}$ and $\rho_{AB}$ denotes LEW operator and a bipartite state in $d_1\otimes d_2$ dimensional system respectively then
\begin{equation}
\begin{split}
W_{NL}^{(3)}&=W_L^{2}-\frac{1}{(d_1d_2)^2}\big(det(I_{d_{2}}+Tr_{A}(\rho_{AB}))-\\
&det(I_{d_{1}d_{2}}+\rho_{AB})\big)I_{d_{1}d_{2}}+\\& (d_1d_{2})^{2}(I_{d_{1}}\otimes Tr_{A}(\rho_{AB})-\rho_{AB})
\end{split}
\label{NEW W0 2*4}
\end{equation}
is a NLEW operator, where the notations have their usual meaning.
\end{theorem}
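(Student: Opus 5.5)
The plan is to verify the two defining conditions \textbf{C1} and \textbf{C2} of (\ref{nlwitdef}) for $W_{NL}^{(3)}$ in (\ref{NEW W0 2*4}), following the proofs of the previous two theorems. Non-linearity is clear, since $W_{NL}^{(3)}$ contains $W_L^2$ and terms that depend on $\rho_{AB}$.

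\textbf{Condition C1.} For an arbitrary separable state $\rho_{AB}^{sep}$, expand $Tr(W_{NL}^{(3)}\rho_{AB}^{sep})$ into three terms.
\begin{itemize}
\item The first term is $Tr(W_L^2\rho_{AB}^{sep})$. It is non-negative because $W_L^2\geq 0$ and $\rho_{AB}^{sep}\geq 0$, by Result-5, exactly as in fact \textbf{(F2)}.
\item The second term is $-\frac{1}{(d_1d_2)^2}k$, where $k=det(I_{d_2}+Tr_A(\rho_{AB}^{sep}))-det(I_{d_1d_2}+\rho_{AB}^{sep})$. Every separable state is PPT, so (\ref{det 1st identity}) of Result-2 gives $k\leq 0$, and hence this term is non-negative.
\item The third term is $(d_1d_2)^2\,Tr\big((I_{d_1}\otimes Tr_A(\rho_{AB}^{sep})-\rho_{AB}^{sep})\rho_{AB}^{sep}\big)$. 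By Result-3, (\ref{2nd operator}), the bracketed operator is positive semidefinite for PPT states. Its product with $\rho_{AB}^{sep}$ then has non-negative eigenvalues by Result-5, so this trace is non-negative.
\end{itemize}
Summing the three terms gives $Tr(W_{NL}^{(3)}\rho_{AB}^{sep})\geq 0$. Unlike Theorems 1 and 2, this argument never uses $Tr(W_L\rho^{sep})\geq 0$, because there is no linear term. So C1 holds for any hermitian $W_L$.

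\textbf{Condition C2.} I would take the same data as in Theorem 1: the state $\ket{\phi^{+}}_{AB}$ and the witness $W_L^p$ of (\ref{WLP WO}). From (\ref{expressval}) we already have $Tr((W_L^p)^2\rho_{\phi^+})=p^2/4$, $det(I_4+\rho_{\phi^+})=2$ and $det(I_2+Tr_A\rho_{\phi^+})=9/4$. The determinant contribution is therefore $-\frac{1}{16}\cdot\frac14=-\frac{1}{64}$. The only new quantity is the third term. Since $Tr_A\rho_{\phi^+}=I_2/2$ and $\rho_{\phi^+}^2=\rho_{\phi^+}$, we get $Tr\big((I_2\otimes \tfrac{I_2}{2}-\rho_{\phi^+})\rho_{\phi^+}\big)=\tfrac12-1=-\tfrac12$. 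Multiplied by $(d_1d_2)^2=16$, this contributes $-8$. Hence
\begin{equation*}
Tr(W_{NL}^{(3)}\rho_{\phi^+})=\frac{p^2}{4}-\frac{1}{64}-8<0 \quad \text{for all } 0<p\leq 1,
\end{equation*}
whereas $W_L^p$ alone does not detect $\ket{\phi^+}$.

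The only step needing care is C2. The point is to choose a state that violates the reduction-type operator inequality (\ref{2nd operator}) strongly enough to dominate the positive $W_L^2$ contribution. Given the $(d_1d_2)^2$ weight, a maximally entangled pure state already does this comfortably, and the remaining computation is routine.
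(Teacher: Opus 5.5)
Your proof is correct and takes the paper's approach. For C1 you use the same three positivity facts (Result-5, Result-2, and the reduction-type inequality of Result-3), which the paper only invokes as ``by construction''. For C2 you exhibit an explicit detected state, as the paper does.

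The only difference is the choice of example. The paper uses the singlet $\rho_{2\otimes3}$ embedded in $2\otimes 3$ with $W_L=(\ket{\psi^-}\bra{\psi^-})^{T_B}$, which gives $\tfrac14-\tfrac1{144}-18<0$, though the paper does not write out these numbers. You reuse $\ket{\phi^+}$ and $W_L^p$ from Theorem 1 and obtain $\tfrac{p^2}{4}-\tfrac1{64}-8<0$. In both cases the negativity comes entirely from the $(d_1d_2)^2$-weighted reduction term, as you observe.
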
 
\begin{proof}
By construction, it can be easily proved that $Tr(W_{NL}^{(3)}\rho_{AB}^{sep})\geq 0,~~\forall \rho_{AB}^{sep}$. Therefore, the only thing remain to show that there exist at least one entangled state, which is detected by $W_{NL}^{(3)}$.\\ 
To achieve this aim, let us consider a $2\otimes3$ NPTES given by \cite{Chi_2003}:
\begin{equation}
	\rho_{2\otimes3}=\begin{pmatrix}
		0&0&0&0&0&0\\
		0&\frac{1}{2}&0&-\frac{1}{2}&0&0\\
		0&0&0&0&0&0\\
		0&-\frac{1}{2}&0&\frac{1}{2}&0&0\\
		0&0&0&0&0&0\\
		0&0&0&0&0&0
	\end{pmatrix}
\label{entstatewnl3}
\end{equation}
Further, consider a linear witness  which represents a state in subspace of six-dimensional vector space and is of the form as \cite{Sen_2023,Guhne_2009} 
\begin{equation}
	W_L^{\ket{\psi^-}}=\ket{\psi^{-}}\bra{\psi^{-}}^{T_{B}}
\end{equation}
where $\ket{\psi^{-}}=\frac{1}{\sqrt{2}}(\ket{01}-\ket{10})$.\\
In this case, we found that $Tr(W_L^{\ket{\psi^-}}\rho_{2\otimes3})>0$ and thus it may be concluded that the LEW operator does not detect the state given in (\ref{entstatewnl3}). However, if we construct a NLEW operator defined in (\ref{NEW W0 2*4}) by using $W_L^{\ket{\psi^-}}$, then we can have the following inequality
\begin{equation}
	Tr({W_{NL}^{(3)}}\rho_{2\otimes3})<0
\end{equation}
Thus, it proves that there exist an entangled state detected by (\ref{NEW W0 2*4}). Hence $W_{NL}^{(3)}$ is a NLEW operator.
\end{proof}
\subsection{Efficiency of the non-linear witness operator over the linear witness operator}
In this section, we analyze the efficiency of the constructed NLEW operators $W_{NL}^{(1)}, W_{NL}^{(2)}$ and $W_{NL}^{(3)}$. But since we are interested here to analyze the efficiency of the NLEW operator over the given LEW operator so we will consider those LEW operator that are weak in the context of the detection of entangled state. By the weak LEW operator, we mean that those LEW operator which are either incapable of detecting any entangled state or a few entangled states in a given family of the quantum state. To go into the depth of our analysis, we will consider a few examples of states in different dimensional system.
\subsubsection{Efficiency of $W_{NL}^{(1)}$ in $2\otimes 2$ dimensional system}
\noindent Let us consider a LEW operator expressed in matrix form as \cite{Milne_2015}:\\
\begin{align}
	W_{L}^{p}=\frac{1}{2}\begin{pmatrix}
		p&0&0&0\\
		0&1-p&1&0\\
		0&1&1-p&0\\
		0&0&0&p
	\end{pmatrix}, \text{where}~~0<p\leq1
\end{align}
\textbf{Example-1}. Let us consider the two-qubit \emph{isotropic} state defined by the density operator $\rho_{AB}^{\alpha}$ as follows \cite{Bertlmann_2009}
\begin{align}
	\rho_{AB}^{\alpha}=
	\begin{pmatrix}
		\frac{1+\alpha}{4}&0&0&\frac{\alpha}{2}\\
		0&  \frac{1-\alpha}{4}   &0&0\\
		0&0&\frac{1-\alpha}{4}&0\\
		\frac{\alpha}{2}&0&0&	\frac{1+\alpha}{4}
	\end{pmatrix},~~~ -\frac{1}{3}\leq\alpha\leq1
\end{align}
The \emph{isotropic} two-qubit state $\rho_{AB}^{\alpha}$ is separable when $\frac{-1}{3}\leq\alpha\leq\frac{1}{3}$ and NPTES when $\frac{1}{3}<\alpha\leq1$. We may note here that the LEW operator $W_{L}^{p}$ fail to detect this state $\rho_{AB}^{\alpha}$ as an entangled state i.e $Tr(W_{L}^{p}\rho_{AB}^{\alpha})\geq0$ for  $\frac{1}{3}<\alpha\leq 1 $ and $0<p\leq1$.\\
Now, we have
\begin{align}
	\begin{split}
		Tr((W_{L}^{p})^{2}\rho_{AB}^{\alpha})&=\frac{1}{4}((1-\alpha)(1-p)+p^2)\\
		Tr(W_{L}^{p}\rho_{AB}^{\alpha})&=\frac{1}{4}(1+(-1+2p)\alpha)\\
		det(I_4+\rho_{AB}^{\alpha})&=\frac{1}{256}(-5+\alpha)^2(25+10\alpha-3\alpha^2)\\
		det(I_2+Tr_{A}(\rho_{AB}^{\alpha}))&=\frac{9}{4}
	\end{split}
	\label{valuesalphastate}
\end{align}
The expectation value of $W_{NL}^{(1)}$ with respect to the state $\rho_{AB}^{\alpha}$ is given by
\begin{equation}
	\begin{split}
		Tr(W_{NL}^{(1)}\rho_{AB}^{\alpha})&=\frac{1}{d_{1}d_{2}}Tr((W_{L}^{p})^{2}\rho_{AB}^{\alpha})+\frac{d_{1}d_{2}}{d_{1}d_{2}+1}Tr(W_{L}^{p}\rho_{AB}^{\alpha})\\
		&-(det(I_{d_{2}}+Tr_{A}(\rho_{AB}^{\alpha}))-det(I_{d_{1}d_{2}}+\rho_{AB}^{\alpha}))
		\label{wnlex1}
	\end{split}
\end{equation}
Using the values given in (\ref{valuesalphastate}), the expectation value (\ref{wnlex1}) reduces to
\begin{align}
	\begin{split}
		Tr(W_{NL}^{(1)}\rho_{AB}^{\alpha})&=\frac{1}{1280}\Bigg(581+80p^2+16p(-5+37\alpha)-\\
		&\alpha\bigg(336+5\alpha(150+\alpha)\big(-40+3\alpha\big)\bigg)\Bigg)	\end{split}
\end{align}\\
We find that $Tr(W_{NL}^{p}\rho_{AB}^{\alpha})<0$ when $0.968\leq\alpha\leq 1$ and $0.5210\leq p\leq0.5213$. Therefore, for the parameter $p\in [0.5210,0.5213]$, there exist LEW operator $W_{L}^{p}$ which does not detect the state $\rho_{AB}^{\alpha}$ for any $\alpha \in (\frac{1}{3},1]$ but the NLEW operator $W_{NL}^{(1)}$ detect it when $\alpha \in [0.968,1]$.\\
\textbf{Example-2.} Consider a maximally entangled mixed state (MEMS) defined as \cite{Hiroshima_2000}:
\begin{align}
	\rho_{AB}^{(4)}=p_{1}\ket{\phi^{-}}\bra{\phi^{-}}+p_{2}\ket{01}\bra{01}+p_{3}\ket{\phi^{+}}\bra{\phi^{+}}+p_{4}\ket{10}\bra{10}
	\label{rho4}
\end{align}
where $\sum_{i=1}^{4}p_{i}=1$ and $\ket{\phi^{\pm}}=\frac{1}{\sqrt{2}}(\ket{00}\pm\ket{11})$.\\
If we now choose the state parameters as $p_{1}=q,~ p_{2}=0.00003,~ p_{3}=0.02,~p_{4}=0.97997-q,~\text{where}~~0\leq q\leq0.97997$, then the density matrix $\rho_{AB}^{(4)}$ reduces to
\begin{align}
	\rho_{AB}^{q}=	\begin{pmatrix}
		0.01+\frac{q}{2}&0&0&0.01-\frac{q}{2}\\
		0&0.00003&0&0\\
		0&0&0.97997-q&0\\
		0.01-\frac{q}{2}&0&0&0.01+\frac{q}{2}
	\end{pmatrix}
\end{align}
It can be easily shown that $\rho_{AB}^{q}$ represented an entangled state when $0\leq q\leq 0.97997$ and MEMS when $0.031\leq q\leq 0.97997$. Further, it can be shown that LEW operator $W_{L}^{p}$ does not detect any member of the family of entangled states described by the density operator $\rho_{AB}^{q}$. 
In this case, $Tr(W_{NL}^{(1)}\rho_{AB}^{q})$ can be calculated as
\begin{align}
	\begin{split}
		Tr(W_{NL}^{(1)}\rho_{AB}^{q})=&0.524201 + 0.0625p^2+p(-0.5065 + 0.925 q)\\
		&+(-0.0153706 - 0.770031 q)q
	\end{split}
\end{align}
Our finding suggest that $Tr(W_{NL}^{(1)}\rho_{AB}^{q})<0$ when $p \in [0.2450, 0.2475]$ and $q \in [0.875, 0.97997]$. Thus, NLEW operator has larger entanglement-detecting capability as compared to LEW operator.\\
Hence, we have shown here that there is a LEW operator $W_{L}^{p}$ which is unable to detect any member of the family of entangled state described by the density operator $\rho_{AB}^{\alpha}$ and $\rho_{AB}^{q}$ defined in  $2\otimes 2$ dimensional system but instead using $W_{L}^{p}$, we can construct a NLEW operator $W_{NL}^{(1)}$ that may detect some entangled state from the same considered family. This shows that it is possible to construct a more efficient NLEW operator such as $W_{NL}^{(1)}$ using the LEW operator $W_{L}^{p}$.\\
\subsubsection{Efficiency of $W_{NL}^{(1)}$ and $W_{NL}^{(2)}$ in $3\otimes 3$ dimensional system}
\noindent In this section, we study the efficiency of NLEW operator $W_{NL}^{(1)}$ and $W_{NL}^{(2)}$ over LEW operator in detecting the NPTES lying in $3\otimes 3$ dimensional system.\\
To do this, let us consider a linear witness as $\ket{\phi^+}\bra{\phi^+}^{T_B}$ where $\ket{\phi^+}=\frac{\ket{00}+\ket{11}}{\sqrt{2}}$ represent a state in a subspace of a nine-dimensional vector space, which can be expressed in matrix form as 
\begin{align}
	W_{L}^{3\otimes 3}=	\begin{pmatrix}
		\frac{1}{2}&0&0&0&0&0&0&0&0\\
		0&0&0&\frac{1}{2}&0&0&0&0&0\\
		0&0&0&0&0&0&0&0&0\\
		0&\frac{1}{2}&0&0&0&0&0&0&0\\
		0&0&0&0&\frac{1}{2}&0&0&0&0\\
		0&0&0&0&0&0&0&0&0\\
		0&0&0&0&0&0&0&0&0\\
		0&0&0&0&0&0&0&0&0\\
		0&0&0&0&0&0&0&0&0\\
	\end{pmatrix}
\end{align}
\textbf{Example-3.} Let us now consider a one-parameter family of isotropic states lying on a $3\otimes3$ dimensional system, i.e. a two-qutrit system \cite{Wang_2010}, which is defined as
\begin{equation}
	\rho_{AB}^{3\otimes3}(\gamma)=\frac{1-\gamma}{8}I_{9}+\frac{9\gamma-1}{8}\ket{\psi^+}\bra{\psi^+},~0\leq\gamma\leq1
	\label{isotropicstate}
\end{equation}
where $\ket{\psi^+}=\frac{1}{\sqrt{3}}( \ket{00}+\ket{11}+\ket{22})$ and $\gamma=\bra{\psi^+}\rho_{AB}^{3\otimes3}(\gamma)\ket{\psi^+}$. The state $\rho_{AB}^{3\otimes3}(\gamma)$ is separable when $0\leq \gamma\leq\frac{1}{3}$ and NPTES when $\frac{1}{3}<\gamma\leq 1 $. It can be seen that $Tr(W_{L}^{3\otimes 3}\rho_{AB}^{3\otimes3}(\gamma))\geq0$ for $\gamma \in [0,1]$ i.e. a LEW operator $W_{L}^{3\otimes3}$ fails to detect any member of the family described by the density matrix $\rho_{AB}^{3\otimes3}(\gamma)$.\\
Now, let us calculate a few quantities using LEW operator $W_{L}^{3\otimes3}$, which is given by
\begin{equation}
	\begin{split}
		Tr((W_{L}^{3\otimes3})^{2}\rho_{AB}^{3\otimes3}(\gamma))&=\frac{1}{48}(5+3\gamma)\\
		Tr((W_{L}^{3\otimes3})\rho_{AB}^{3\otimes3}(\gamma))&=\frac{1}{12}(1+3\gamma)\\
		det(I_9+\rho_{AB}^{3\otimes3}(\gamma))&=\frac{(-9+\gamma)^8(1+\gamma)}{16777216}\\
		det(I_3+Tr_{A}(\rho_{AB}^{3\otimes3}(\gamma)))&=\frac{64}{27}
	\end{split}
	\label{gamma state values 3-3}
\end{equation} 	
Using all the values given in (\ref{gamma state values 3-3}), we find that the expectation value of the operator $W_{NL}^{(1)}$ satisfies
\begin{equation}
	\begin{split}
		Tr((W_{NL}^{(1)})\rho_{AB}^{3\otimes3}(\gamma))<0,~~0.932\leq\gamma\leq1
	\end{split}
\end{equation} 
Thus, $W_{NL}^{(1)}$ detected a few member of the family of the state described by the density operator $\rho_{AB}^{3\otimes3}(\gamma)$, while linear witness operator fails to detect any member of the family.\\
\textbf{Example-4.}  In this example, we consider a LEW operator for $3\otimes 3$ dimensional system, which may be defined as \cite{Wudarski_2011}:
\begin{align}
	W_{L}^{c}=\frac{1}{33}\begin{pmatrix}
		1&0&0&0&-1&0&0&0&-1\\
		0&9&0&0&0&0&0&0&0\\
		0&0&1&0&0&0&0&0&0\\
		0&0&0&1&0&0&0&0&0\\
		-1&0&0&0&1&0&0&0&-1\\
		0&0&0&0&0&9&0&0&0\\
		0&0&0&0&0&0&9&0&0\\
		0&0&0&0&0&0&0&1&0\\
		-1&0&0&0&-1&0&0&0&1\\
	\end{pmatrix}
	\label{WLCWO}
\end{align}
Recalling a one-parameter family of isotropic state given in (\ref{isotropicstate}), we find that $W_{L}^{c}$ is detecting some entangled state in the family of isotropic state, which is evident from the following expression
\begin{equation}
	\begin{split}
		Tr(W_{L}^{c}\rho_{AB}^{3\otimes3}(\gamma))=&\frac{1}{132}(17-21\gamma)\\
		&<0,~0.81\leq \gamma\leq1
	\end{split}
	\label{WLqutritfstate}
\end{equation}
Using (\ref{isotropicstate}) and (\ref{WLCWO}), we calculate the following quantities
\begin{align}
	\begin{split}
		Tr(W_{L}^{c}\rho_{AB}^{3\otimes3}(\gamma))&=\frac{1}{132}(17-21\gamma)\\
		Tr((W_{L}^{c})^{2}\rho_{AB}^{3\otimes3}(\gamma))&=\frac{127-123\gamma}{4356}\\
		det(I_9+\rho_{AB}^{3\otimes3}(\gamma))&=\frac{(-9+\gamma)^8(1+\gamma)}{16777216}\\
		det(I_3+Tr_{A}(\rho_{AB}^{3\otimes3}(\gamma)))&=\frac{64}{27}
	\end{split}
	\label{qutrit f state}
\end{align}
Therefore, using the quantities given in (\ref{qutrit f state}), the expectation value of the NLEW operator $W_{NL}^{(2)}$ defined in (\ref{NEW W0 3*3}) with respect to the state $\rho_{AB}^{3\otimes3}(\gamma)$ can be calculated as 
\begin{equation}
	\begin{split}
		Tr(W_{NL}^{(2)}\rho_{AB}^{3\otimes3}(\gamma))=&\frac{1}{164433494016}\Bigg(1833616212611 - \\
		&1984641812181 \gamma - 631356623172 \gamma^2 + \\
		&	37810964268 \gamma^3 - 9302697558 \gamma^4 +\\ & 1367063082 \gamma^5-125962452 \gamma^6 +\\ & 174332 \gamma^7 - 231957 \gamma^8 + 3267 \gamma^9\Bigg)\\ & < 0,~~0.752\leq \gamma\leq1
	\end{split}
\end{equation}
Further, it can be easily seen that the non-linear entanglement witness operator $W_{NL}^{(2)}$ is detecting more entangled state in the family described by the density operator $\rho_{AB}^{3\otimes3}(\gamma)$ in comparison to the linear entanglement witness operator $W_{L}^{c}$.
\subsubsection{Efficiency of $W_{NL}^{(3)}$ in $2\otimes 4$ dimensional system}
Here, we will show the efficiency of the NLEW operator $W_{NL}^{(3)}$ defined for $2\otimes 4$ dimensional system over the LEW operator defined for the same dimensional system. For this, let us consider a Hermitian LEW operator for $2\otimes4$ dimensional system as $W_{L}^{\phi^{+}}=\ket{\phi^{+}}\bra{\phi^{+}}^{T_{B}}$\cite{Sen_2023,Guhne_2009} where $\ket{\phi^{+}}=\frac{1}{\sqrt{2}}(\ket{00}+\ket{11})$ represent a state in a subspace of a eight-dimensional vector space. It can be explicitly expressed in the matrix form as
\begin{align}
	W_{L}^{\phi^{+}}=\frac{1}{2}\begin{pmatrix}
		1&0&0&0&0&0&0&0\\
		0&0&0&0&1&0&0&0\\
		0&0&0&0&0&0&0&0\\
		0&0&0&0&0&0&0&0\\
		0&1&0&0&0&0&0&0\\
		0&0&0&0&0&1&0&0\\
		0&0&0&0&0&0&0&0\\
		0&0&0&0&0&0&0&0\\
	\end{pmatrix}
	\label{witphi}
\end{align}
\textbf{Example-5:} Let us now consider a family of NPTES in $H_2\otimes H_4$ dimensional system described by the density operator $\rho_{AB}^{(b)}$ as
\begin{align}
\rho_{AB}^{(b)}=\begin{pmatrix}
\frac{b}{6b+1}&0&0&0&0&0&0&\frac{b}{6b+1}\\
0&\frac{b}{6b+1}&0&0&0&0&\frac{b}{6b+1}&0\\
0&0&\frac{b}{6b+1}&0&0&\frac{b}{6b+1}&0&0\\
0&0&0&0&0&0&0&0\\
0&0&0&0&0&0&0&0\\
0&0&\frac{b}{6b+1}&0&0&\frac{b}{6b+1}&0&0\\
0&\frac{b}{6b+1}&0&0&0&0&\frac{b}{6b+1}&0\\
\frac{b}{6b+1}&0&0&0&0&0&0&\frac{b+1}{6b+1}\\
\end{pmatrix}
\label{rhob}
\end{align}
where $b\in[0,1]$.\\
It can easily shown that LEW operator $W_{L}^{\ket{\phi^{+}}}$ does not detect the entangled state $\rho_{AB}^{(b)}$. Using the LEW operator given in (\ref{witphi}) and the state $\rho_{AB}^{(b)}$ given in (\ref{rhob}), we can calculate the value of the following quantities as per the requirement given in (\ref{NEW W0 2*4}) as
\begin{align}
	\begin{split}
		Tr(W_{L}^{\ket{\phi^{+}}}\rho_{AB}^{(b)}))&=\frac{b}{1+6b}\\
		Tr((W_{L}^{\phi^{+}})^{2} \rho_{AB}^{(b)})&=\frac{3}{4(1+6b)}\\
		det(I_8+\rho_{AB}^{(b)})&=\frac{(1+8b)^2(2+21b+48b^2)}{(1+6b)^4}\\
		det(I_4+Tr_{A}(\rho_{AB}^{(b)}))&=\frac{(1+7b)(2+7b)(1+8b)^2}{(1+6b)^4}
	\end{split}
	\label{q1}
\end{align}
Using the calculated values given in (\ref{q1}), we can find the expectation value of $W_{NL}^{(3)}$ with respect to the state $\rho_{AB}^{(b)}$ as
\begin{equation}
	\begin{split}
		Tr(W_{NL}^{(3)}\rho_{AB}^{(b)})=&\frac{1}{64(1+6b)^4}\bigg(
		48 b - 7329 b^2 \\
		&- 93136 b^3 - 284608 b^4\bigg)
	\end{split}
\end{equation}
It can be verified that $Tr(W_{NL}^{(3)}\rho_{AB}^{(b)})<0$ for all $b\in (0,1]$. Therefore, in this example, we find that the $W_{NL}^{(3)}$ is more efficient than the LEW operator from which the NLEW operator has been constructed in the sense that the operator $W_{NL}^{(3)}$ detect more entangled state in comparison to the given LEW operator $W_{L}^{\phi^{+}}$.

\section{Construction of the NLEW operator from CCNR and DV separability criterion}
This section discusses the previously well-established separability criteria's such as Computable Cross-Norm Realignment (CCNR) or simply Realignment criterion and De vicente (DV) criterion. Thereafter, we will use the above discussed separability criterion to construct the LEW operator and henceforth, we will use the constructed LEW operator for further construction of NLEW operator. We will then show that the constructed NLEW operator detect the NPTES, which is not detected by the corresponding separability criterion. 
\subsection{CCNR Criteria}
CCNR criterion states that any separable state described by the density operator $\rho^{sep}$ satisfies the following inequality \cite{Rudolph_2003} 
\begin{equation}
||C||_1\leq1
\label{CCNR}
\end{equation}
where $C$ denotes the correlation matrix with the matrix elements generally defined as $C_{a,b}=Tr(\rho^{sep} G_a^{A}\otimes G_b^{B})$ for $a\geq 0$ and $b\geq 0$. $ G_a^{A}$ and  $ G_b^{B}$ represents the arbitrary orthonormal basis belonging to $\mathcal{B}(H_A)$ and $\mathcal{B}(H_B)$ respectively. However, If we choose $(G_a^{A})_{a\ne0}$ and $(G_b^{B})_{b\ne0}$ as an orthonormal traceless operators with  $G_0^{A}=\frac{I_{A}}{\sqrt{d_1}}$ and $G_0^{B}=\frac{I_{B}}{\sqrt{d_1}}$ then $\frac{1}{\sqrt{d_1}}G_a^{A}$$\in \mathcal{B}(H_A)$ and $\frac{1}{\sqrt{d_1}}G_b^{B}$$\in \mathcal{B}(H_B)$ forms a canonical basis. The correlation matrix in canonical bases may be denoted by $C^{can}$. It can be easily shown that $||C||_1=||C^{can}||_1$ \cite{Chruscinki_2020}.\\ 
CCNR criteria is a necessary but not a sufficient criterion. The contrapositive statement is therefore useful in detecting entangled state, which states that if any bipartite state violate the condition (\ref{CCNR}) then the state must be an entangled state. However, one can find that the necessary and sufficient condition hold for CCNR criterion for the following paticular classes of states \cite{Rudolph_2005}: (i) All pure states, (ii) Bell Diagonal states, (iii) $2\otimes 2$ dimensional Werner states and (iv) Bipartite isotropic states in arbitrary dimensions.  CCNR criterion can be used as a novel computable separability criteria to detect a few negative partial transpose entangled states as well as bound entangled states.
\subsubsection{Construction of LEW Using CCNR Criterion}
In this section, we have shown that a LEW operator can be constructed using CCNR separability criterion. The construction can be explained by the theorem given below:\\
\begin{theorem}
 If the state under investigation is described by the density operator $\rho_{AB}$ in $d\otimes d$ dimensional system then a linear witness operator denoted by $W_L^{CCNR}$ may be expressed in the following form 
\begin{equation}
	\begin{split}
		W_L^{CCNR}&=\Big(1-d^{2}(d+1)^2~k\Big)I_{d^{2}}-\frac{1}{\lambda_{\max}(\rho_{AB})}C^{can}
	\end{split}
	\label{WCCNR1}
\end{equation}
where $k=(det(I_{d}+Tr_{A}(\rho_{AB}))-det(I_{d^{2}}+\rho_{AB}))$ is defined in $(\ref{det k constant})$. $Tr_{A}(\rho_{AB})$ denotes the partial trace of the state $\rho_{AB}$ with respect to the subsystem $A$ and $I_{d^{2}}$ is the identity matrix of order $d^{2}\times d^{2}$. $C^{can}$ denote the correlation matrix of the state $\rho_{AB}$ in the canonical bases.\\
\end{theorem}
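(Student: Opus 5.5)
We need to show $W_L^{CCNR}$ satisfies the two witness conditions C1 and C2.

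For C1, I would take an arbitrary separable state $\rho_{AB}^{sep}$ in $d\otimes d$ and compute
\begin{equation*}
Tr(W_L^{CCNR}\rho_{AB}^{sep})=1-d^{2}(d+1)^{2}k-\frac{1}{\lambda_{\max}(\rho_{AB}^{sep})}Tr(C^{can}\rho_{AB}^{sep}),
\end{equation*}
using $Tr(\rho_{AB}^{sep})=1$. Here $C^{can}$ is read as a $d^{2}\times d^{2}$ matrix indexed by the canonical basis, so the product with $\rho_{AB}^{sep}$ is dimensionally meaningful.

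The constant term is handled by (F3). Every separable state is PPT, so Result-2 gives $k\leq 0$, and hence $1-d^{2}(d+1)^{2}k\geq 1$.

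For the trace term I would chain three estimates:
\begin{equation*}
Tr(C^{can}\rho)\leq|Tr(C^{can}\rho)|\leq\|C^{can}\rho\|_1\leq\|C^{can}\|_1\,\|\rho\|_{\infty}=\|C^{can}\|_1\,\lambda_{\max}(\rho).
\end{equation*}
The second step is Result-6. The third is Hölder's inequality for Schatten norms, or equivalently Result-1 applied after polar decomposition. The CCNR criterion (\ref{CCNR}), together with $\|C\|_1=\|C^{can}\|_1$, gives $\|C^{can}\|_1\leq 1$ for separable states. Therefore
\begin{equation*}
\frac{1}{\lambda_{\max}(\rho_{AB}^{sep})}Tr(C^{can}\rho_{AB}^{sep})\leq 1,
\end{equation*}
and combining this with the constant term yields $Tr(W_L^{CCNR}\rho_{AB}^{sep})\geq 0$. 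The normalisation by $\lambda_{\max}$ is exactly what cancels the operator-norm factor in this chain.

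For C2, I would exhibit a concrete entangled state for which the expectation is negative. Natural candidates are the two-qubit isotropic state $\rho_{AB}^{\alpha}$ or $\ket{\phi^{+}}$, for which the required quantities are already tabulated: $\det(I_2+Tr_A\rho)=9/4$, $\det(I_4+\rho)$ is known, and $\lambda_{\max}$ is easily computed.

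For a pure maximally entangled state in the Pauli canonical basis, $C^{can}$ is diagonal, proportional to $\mathrm{diag}(1,1,-1,1)$ up to signs, with $\|C^{can}\|_1=2>1$. Here $k=9/4-2=1/4>0$, so the constant $1-36k$ is negative. Negativity of the whole expression then reduces to checking the sign of $Tr(C^{can}\rho)/\lambda_{\max}$ in that ordering of basis elements. If that term happens to be negative, I would switch to $\rho_{AB}^{\alpha}$ and find the range of $\alpha$ where $1-36k(\alpha)<Tr(C^{can}\rho^{\alpha})/\lambda_{\max}$.

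The main obstacle is interpretive rather than computational. The operator $W_L^{CCNR}$ depends on $\rho_{AB}$ itself, and $C^{can}$ must be identified with an operator on $H_A\otimes H_B$ by a choice of basis ordering. C1 is a genuine inequality only after these conventions are fixed, and the C2 example must be computed in the same convention. Under either convention, the C1 chain above only uses $|Tr(C\rho)|\leq\|C\|_1\lambda_{\max}(\rho)$, which is convention-independent.
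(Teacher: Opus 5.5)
Your proof is correct and has the same skeleton as the paper's. C1 comes from $k\le 0$ (Result-2, since separable states are PPT) combined with the CCNR bound $\|C^{can}\|_1\le 1$. C2 comes from one explicit entangled state.

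\textbf{C1.} The one real difference is how the correlation term is bounded. The paper invokes Result-1 and then Result-6, i.e.\ $Tr(C^{can}\rho)\le\lambda_{\max}(\rho)\,Tr(C^{can})\le\lambda_{\max}(\rho)\|C^{can}\|_1$. But $Tr(AB)\le\lambda_{\max}(A)Tr(B)$ requires $B\geq 0$, and $C^{can}$ is in general neither positive nor even Hermitian, so that step is not justified as written. Your H\"{o}lder estimate $|Tr(C^{can}\rho)|\le\|C^{can}\|_1\|\rho\|_\infty$ reaches the same conclusion for arbitrary $C^{can}$, so your version is the sound one.

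There is one caveat that neither argument addresses. $C^{can}$ is real but generally not symmetric, so $W_L^{CCNR}$ need not be Hermitian and $Tr(C^{can}\rho)$ can be complex. For example, for the product state $\frac{1}{2}(\ket{0}+\ket{1})\otimes(\ket{0}+i\ket{1})$, with $G_a=\sigma_a/\sqrt{2}$ ordered $I,\sigma_x,\sigma_y,\sigma_z$, it equals $(1-i)/4$. You can handle this in either of two ways:
\begin{itemize}
\item read your first step as $\mathrm{Re}\,Tr(C^{can}\rho)\le|Tr(C^{can}\rho)|$;
\item replace $C^{can}$ by its Hermitian part, whose trace norm is at most $\|C^{can}\|_1$.
\end{itemize}
Neither change affects the C2 examples below, where $C^{can}$ is symmetric.

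\textbf{C2.} The paper uses the mixed state $\rho_{AB}^{(1)}$ (a mixture of $\ket{00}$ and $\ket{\psi^-}$) and reports $-0.152209$. That is the value obtained from the ordering above with indices matched to $\ket{00},\ket{01},\ket{10},\ket{11}$.

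Your choice of $\ket{\phi^+}$ is simpler, but you stopped before concluding, and the sign you hedge on does not matter. With $\lambda_{\max}=1$ and $k=1/4$, the constant is $1-36k=-8$. Your own C1 bound gives $|Tr(C^{can}\rho)|/\lambda_{\max}(\rho)\le\|C^{can}\|_1=2$. Hence $Tr(W_L^{CCNR}\rho_{\phi^+})\le -6$ for any ordering, and it is exactly $-17/2$ in the paper's convention. No fallback to $\rho_{AB}^{\alpha}$ is needed.

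The same estimate extends to every dimension, whereas the paper checks only $d=2$. For the maximally entangled state in $d\otimes d$, $k=(1+1/d)^d-2\ge 1/4$ and $\|C^{can}\|_1=d$, so the expectation value is negative for all $d$.

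In both examples the detection is driven by $k>0$; the correlation term only has to be bounded. In the paper's example it actually raises the value by about $0.186$.
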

\begin{proof}
To prove the operator $W_L^{CCNR}$ as a witness operator, we need to fulfill two conditions given in (\ref{witdef}). To start with, let us consider an arbitrary bipartite separable state described by the density operator $\rho_{sep}$ in $d \otimes d$ dimensional system. Therefore, the expectation value of $W_L^{CCNR}$ with respect to the separable state $\rho_{sep}$ is given by 
\begin{equation}
\begin{split}
Tr(W_L^{CCNR}\rho_{sep})=&1-d^{2}(d+1)^2~k\\
&-\frac{1}{\lambda_{\max}(\rho_{sep})}Tr(C^{can}\rho_{sep})
\end{split}
\label{WCCNR2}
\end{equation}
Using the results given in (\ref{result-1}) and (\ref{traceA less norm A}) in (\ref{WCCNR2}), we get
\begin{equation}
	\begin{split}
		Tr(W_L^{CCNR}\rho_{sep})&\geq 1-||C^{can}||_1-d^{2}(d+1)^2~k
	\end{split}
	\label{WCCNR3}
\end{equation}
Since $k\leq0$ from (\ref{det k constant}) and $1-||C^{can}||_1\geq 0$ from CCNR separability criterion given in (\ref{CCNR}), we can conclude for any arbitrary separable state $\rho_{sep}$ that the following inequality holds
\begin{equation}
	Tr(W_L^{CCNR}\rho_{sep})\geq 0
	\label{wit1}
\end{equation}
To prove the second condition of the witness operator, let us consider a NPTES described by the density operator $\rho_{AB}^{(1)}$ in $2\otimes2$ dimensional system. The density operator $\rho_{AB}^{(1)}$ is given by   
\begin{equation}
	\begin{split}
		\rho_{AB}^{(1)}&=\begin{pmatrix}
			0.265822&0&0&0\\
			0& 0.367089 & -0.367089 &0\\
			0 & -0.367089 & 0.367089 & 0\\
			0&0&0&0
		\end{pmatrix}
	\end{split}
	\label{dstate}
\end{equation}
The expectation value of witness operator $W_L^{CCNR}$ with respect to the state $\rho_{AB}^{(1)}$ is given by
\begin{equation}
\begin{split}
Tr(W_L^{CCNR}\rho_{AB}^{(1)})&=-0.152209\\
&<0
\end{split}
\label{wit2}
\end{equation}
Equation (\ref{wit2}) clearly shows that there exist an entangled state $\rho_{AB}^{(1)}$ for which $Tr(W_L^{CCNR}\rho_{AB}^{(1)})<0$. Therefore, combining the equations (\ref{wit1}) and (\ref{wit2}), we can say that the operator $W_L^{CCNR}$ satisfies all the properties of the witness operator. Hence, it proves that $W_L^{CCNR}$ represent a linear entanglement witness operator.
\end{proof}
\subsubsection{Construction of NLEW Operator Using the LEW Operator $W_L^{CCNR}$}
\begin{theorem}
	If  $W_L^{CCNR}$ is a linear entanglement witness operator defined in (\ref{WCCNR1}) and $\lambda_{max}(W_L^{CCNR})$ denote the maximum eigenvalue of $W_L^{CCNR}$ then the non-linear entanglement witness operator may be constructed as  
\begin{equation}
	W_{NL}^{CCNR}=W_L^{CCNR}-\frac{1}{\lambda_{\max}(W_L^{CCNR})}(W_L^{CCNR})^2
	\label{WNLCCNR}
\end{equation}
\end{theorem}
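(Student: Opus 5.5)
The plan is to verify the two conditions \textbf{C1} and \textbf{C2} of (\ref{nlwitdef}) for $W_{NL}^{CCNR}$. The key observation is that $W_{NL}^{CCNR}$ is a function of the single Hermitian operator $W\equiv W_L^{CCNR}$, so the two operators commute and diagonalize simultaneously. Write the spectral decomposition $W=\sum_i w_i P_i$ with $\lambda_{\max}(W)=w_{\max}$. I will assume $w_{\max}>0$, which holds because $Tr(W\rho_{sep})\geq 0$ for some state forces $\lambda_{\max}(W)\geq 0$ by Result-1; the degenerate case $w_{\max}=0$ is excluded since the operator would then be undefined. Then
\begin{equation}
W_{NL}^{CCNR}=\sum_i w_i\Big(1-\frac{w_i}{w_{\max}}\Big)P_i ,
\end{equation}
and the scalar function $g(w)=w(1-w/w_{\max})$ governs the sign of each spectral component.

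For \textbf{C1}, I write
\begin{equation}
Tr(W_{NL}^{CCNR}\rho_{sep})=Tr(W\rho_{sep})-\frac{1}{w_{\max}}Tr(W^2\rho_{sep}).
\end{equation}
The crude route bounds $Tr(W^2\rho_{sep})\leq w_{\max}\,Tr(W\rho_{sep})$. This holds when $W\geq 0$, since then $W^2\leq w_{\max}W$ as operators. It fails in general, however, because a genuine witness has negative eigenvalues, and on those eigenspaces $g(w_i)<0$. The expectation value is $\sum_i g(w_i)\,Tr(P_i\rho_{sep})$. The negative components need to be dominated using the separable constraint $\sum_i w_i\,Tr(P_i\rho_{sep})\geq 0$, which is exactly the statement proved in Theorem 4 via (\ref{WCCNR3}).

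This step is the main obstacle. A clean inequality of the form $Tr(W^2\rho)\leq w_{\max}Tr(W\rho)$ is only available for $W\geq 0$. I would first try to exploit the specific form (\ref{WCCNR1}): $W=cI-\frac{1}{\lambda_{\max}(\rho)}C^{can}$ with $c=1-d^2(d+1)^2k\geq 1$. The aim is to show that, for separable $\rho$, the large shift $-d^2(d+1)^2k\geq 0$ together with $\|C^{can}\|_1\leq 1$ makes $W\geq 0$ on the relevant support. In that case $0\leq w_i\leq w_{\max}$ gives $g(w_i)\geq 0$ termwise, and \textbf{C1} follows. If this does not hold in general, the fallback is to follow the paper's own style and argue directly from the constraint in (\ref{WCCNR3}), giving a lower bound in terms of $1-\|C^{can}\|_1-d^2(d+1)^2k$.

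For \textbf{C2}, I will take the same NPTES $\rho_{AB}^{(1)}$ from (\ref{dstate}), for which $Tr(W\rho_{AB}^{(1)})=-0.152209$ by (\ref{wit2}). The remaining term $-\frac{1}{w_{\max}}Tr(W^2\rho_{AB}^{(1)})$ is nonpositive because $W^2\geq 0$ and $w_{\max}>0$. Hence $Tr(W_{NL}^{CCNR}\rho_{AB}^{(1)})\leq -0.152209<0$, with no further computation needed. This also shows that $W_{NL}^{CCNR}$ detects every state that $W_L^{CCNR}$ detects. Finally, I will note that the operator is non-linear in $W_L^{CCNR}$ because of the quadratic term, which completes the argument.
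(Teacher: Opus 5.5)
Your argument for \textbf{C2} is sound and tidier than the paper's, which just substitutes the three numbers in (\ref{dstatevalues}). Since $(W_L^{CCNR})^2\geq 0$, the quadratic term can only lower the expectation value. One correction there concerns $\lambda_{\max}(W_L^{CCNR})>0$. Its positivity must come from the value $0.161736$ for the operator built from $\rho_{AB}^{(1)}$, not from positivity on separable states. The reason is that $W_L^{CCNR}$ depends on the probed state through $k$, $C^{can}$ and $\lambda_{\max}(\rho_{AB})$, so the operator tested on $\rho_{sep}$ is a different operator.

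The genuine gap is \textbf{C1}, which you correctly flag as the crux but do not prove. Your diagnosis is accurate, and it applies equally to the paper's one-line appeal to Result-1. With $A=(W_L^{CCNR})^2$ that result only gives $Tr(W^2\rho_{sep})\leq\max(w_{\max}^2,w_{\min}^2)$, never $Tr(W^2\rho_{sep})\leq w_{\max}Tr(W\rho_{sep})$. Your fallback through (\ref{WCCNR3}) cannot work either, since nonnegativity of the linear term says nothing about the quadratic one. In fact, for a fixed witness the recipe breaks C1. For the paper's $W_L^{(\psi^-)}$ one has $(W_L^{(\psi^-)})^2=\frac14 I_4$ and $\lambda_{\max}=\frac12$, so $W_L^{(\psi^-)}-2(W_L^{(\psi^-)})^2=W_L^{(\psi^-)}-\frac12 I_4$. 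Its expectation in the product state $\ket{00}$ is $-\frac12$.

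So C1 can only follow from the state dependence of $W_L^{CCNR}$. You would need a lemma that the operator built from any separable $\rho$ is positive semidefinite. That is your ``$W\geq 0$'' route, which is the right one, but you leave it as an aim, and the two facts you cite do not deliver it. Writing $c=1-d^2(d+1)^2k$, you need $\lambda_{\max}(C^{can})\leq c\,\lambda_{\max}(\rho)$. However, $k=0$ on separable states that saturate Result-2. For example, for $\rho=\frac1d\sum_i\ket{ii}\bra{ii}$ one has $c=1$ and $1/\lambda_{\max}(\rho)=d$, so $\|C^{can}\|_1\leq 1$ only bounds the subtracted term $\frac{1}{\lambda_{\max}(\rho)}C^{can}$ by $d$. (There $W\geq 0$ happens to hold, because $C^{can}$ is $\frac1d$ times an orthogonal projector. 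But seeing that requires a genuine spectral estimate of $C^{can}$ against $\lambda_{\max}(\rho)$, not the CCNR bound.)

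Worse, $C^{can}$ is real but in general not symmetric. Take $\rho=\ket{0}\bra{0}\otimes\frac{I_2}{2}$, where again $k=0$. In the normalized Pauli basis the only nonzero entries of $C^{can}$ are $C^{can}_{00}=C^{can}_{z0}=\frac12$. Hence $W=I_4-2C^{can}$ has eigenvalues $0,1,1,1$, but its Hermitian part is indefinite. For this separable state the operator inequality $W\geq 0$ you aim for is false. Moreover, the orthogonal spectral decomposition behind your termwise bound $g(w_i)\,Tr(P_i\rho_{sep})\geq 0$ does not exist, since $W$ is not normal.

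What is missing is a genuine positivity lemma for $W_L^{CCNR}$ on separable states, including how its non-Hermiticity is handled. Neither your proposal nor the paper supplies it.
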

\begin{proof}
(i) For any arbitrary bipartite separable state $\rho_{sep}$, the expectation value of $W_{NL}^{CCNR}$ is given by 
\begin{equation}
	\begin{split}
	Tr(	W_{NL}^{CCNR}\rho_{sep})=&Tr(W_L^{CCNR}\rho_{sep})-\\&
	\frac{1}{\lambda_{\max}(W_L^{CCNR})}Tr((W_L^{CCNR})^2\rho_{sep})
	\end{split}
	\label{WNL CCNR & DV 1}
\end{equation} 
Using Result-1 given in (\ref{result-1}), the equation (\ref{WNL CCNR & DV 1}) reduces to an inequality given by
\begin{equation}
	\begin{split}
		Tr(	W_{NL}^{CCNR}\rho_{sep})&\geq 0
	\end{split}
\label{witnl1}
\end{equation}
Therefore, the expectation value of the non-linear operator $W_{NL}^{CCNR}$ over all separable states is non-negative. Thus, it is only remaining to show that there exist an entangled state for which the expectation value of $W_{NL}^{CCNR}$ is negative. For this, recall the state $\rho_{AB}^{(1)}$ defined in (\ref{dstate}) and using this, let us calculate the following quantities as
\begin{equation}
	\begin{split}
		Tr(W_L^{CCNR}\rho_{AB}^{(1)})=&-0.152209\\
		Tr((W_L^{CCNR})^2\rho_{AB}^{(1)})=&0.304097\\
		\lambda_{\max}(W_L^{CCNR})=&0.161736\\
	\end{split}
	\label{dstatevalues}
\end{equation}
Using the value of the quantities given in (\ref{dstatevalues}), the expectation value of the non-linear operator $W_{NL}^{CCNR}$ with respect to the state $\rho_{AB}^{(1)}$ can be calculated as
\begin{equation}
	Tr(W_{NL}^{CCNR}\rho_{AB}^{(1)})=-2.0317<0
	\label{witnl2}
\end{equation}
Therefore, from (\ref{witnl1}) and (\ref{witnl2}), it is clear that the non-linear operator $W_{NL}^{CCNR}$ qualifies for the witness operator.
\end{proof}
\subsection{DV Criterion}
DV criterion states that if $C$ denote the correlation matrix of the bipartite separable state $\rho_{AB}^{sep}$ in $d \otimes d$ dimensional system then the following inequality is satisfied
\cite{Vicenta_2007}	
\begin{equation}
||C||_1\leq\frac{d(d-1)}{2}
\label{dv}
\end{equation}	
It may be observed that the CCNR and dV criteria will become equivalent for $2\otimes 2$ dimensional system. Further, if we use the correlation matrix $C^{can}$ in canonical basis and since $||C||_1=||C^{can}||_1$, then the above inequality (\ref{dv}) reduces to
\begin{equation}
||C^{can}||_1\leq\frac{d(d-1)}{2}
\label{dv1}
\end{equation}	
We find that the DV criterion usually weak in comparison to CCNR criterion when we are detecting entangled state in $d \otimes d$ dimensional system except $d=2$. However, for $d_{1} \otimes d_{2}$ dimensional system, the DV criterion may give better result than the CCNR criterion. To illustrate, let us consider a bipartite state described by the density operator $\rho_{\beta}$ belongs to a $3 \otimes 3$ dimensional system as \cite{Rashi_2025}
\begin{equation}
	\rho_{\beta}=\frac{1}{5+2\beta^2}\sum_{i=1}^{3}\ket{\epsilon_i}\bra{\epsilon_i},~~\frac{1}{\sqrt{2}}\leq\beta\leq1
	\label{betastate}
\end{equation}
where $\ket{\epsilon_i}=\ket{0i}-\beta\ket{i0},~i=1,2$ and $\ket{\epsilon_3}=\sum_{j=0}^{2}\ket{jj}$.\\
The state $\rho_{\beta}$ represent a NPT entangled state in the whole range i.e. $\frac{1}{\sqrt{2}}\leq\beta\leq1$ but we find that CCNR criterion detect $\rho_{\beta}$ as an entangled state when $\beta \in [\frac{1}{\sqrt{2}},1]$. However, we find that not a single state belong to the family of the NPT entangled state $\rho_{\beta}$ is detected by DV criterion. 
\subsubsection{Construction of LEW Operator Using DV Criterion}
To strengthen the DV criterion, we construct a linear entanglement witness operator using the DV criterion itself. The prescription for the construction of linear witness operator can be given in the following theorem.\\
\begin{theorem}
 For a bipartite state $\rho_{AB}$ lying in $d \otimes d$ dimensional system, a LEW operator can be constructed as 
\begin{equation}
	\begin{split}
		W_L^{DV}=\Big(\frac{d(d-1)}{2}-d^{2}(d+1)^2~k\Big)I_{d^{2}}
		-\frac{1}{\lambda_{\max}(\rho_{AB})}C^{can}
	\end{split}
	\label{WDV1}
\end{equation}
where the terms $k$, $C^{can}$ and $\lambda_{\max}(\rho_{AB})$ have their usual meaning and defined in the earlier sections.
\end{theorem}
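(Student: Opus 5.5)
The plan mirrors the proof of the CCNR witness $W_L^{CCNR}$ in (\ref{WCCNR1}), with the DV bound (\ref{dv1}) playing the role of the CCNR bound (\ref{CCNR}). We have to verify the two conditions in (\ref{witdef}): nonnegativity on all separable states, and negativity on at least one entangled state.

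\textbf{Condition (i).} Let $\rho_{sep}$ be an arbitrary separable state in $d\otimes d$, and treat $W_L^{DV}$ as built from $\rho_{sep}$ itself, exactly as in (\ref{WCCNR2}). Taking the expectation value gives
\begin{equation*}
Tr(W_L^{DV}\rho_{sep})=\frac{d(d-1)}{2}-d^{2}(d+1)^{2}k-\frac{1}{\lambda_{\max}(\rho_{sep})}Tr(C^{can}\rho_{sep}).
\end{equation*}
For the last term, we first apply Result-1 (\ref{result-1}), with $\rho_{sep}$ in the role of the hermitian operator $A$. This yields $|Tr(C^{can}\rho_{sep})|\leq \lambda_{\max}(\rho_{sep})\,|Tr(C^{can})|$, so the factor $1/\lambda_{\max}(\rho_{sep})$ cancels. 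Result-6 (\ref{traceA less norm A}) then bounds $|Tr(C^{can})|$ by $\|C^{can}\|_1$. The same two steps were used in passing to (\ref{WCCNR3}); here they give
\begin{equation*}
Tr(W_L^{DV}\rho_{sep})\geq \frac{d(d-1)}{2}-\|C^{can}\|_1-d^{2}(d+1)^{2}k.
\end{equation*}
Now we use two facts. First, the DV criterion (\ref{dv1}) gives $\frac{d(d-1)}{2}-\|C^{can}\|_1\geq 0$ for separable states. Second, separable states are PPT, so (F3), i.e. (\ref{det k constant}), obtained from Result-2, gives $k\leq 0$. Since $k\leq 0$, the term $-d^{2}(d+1)^{2}k$ is nonnegative. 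Together these yield $Tr(W_L^{DV}\rho_{sep})\geq 0$ for every separable state.

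\textbf{Condition (ii).} We need a single explicit entangled state on which the witness is negative. The natural first choice is the $2\otimes2$ NPT state $\rho_{AB}^{(1)}$ in (\ref{dstate}). For $d=2$ we have $d(d-1)/2=1$, so $W_L^{DV}$ coincides with $W_L^{CCNR}$. This matches the remark that DV and CCNR are equivalent in $2\otimes2$. Consequently $Tr(W_L^{DV}\rho_{AB}^{(1)})=-0.152209<0$, already computed in (\ref{wit2}). If a genuinely $d\geq3$ example is preferred, we would instead evaluate the witness on a strongly entangled pure state such as $\frac{1}{\sqrt d}\sum_i\ket{ii}$. For that state one computes $C^{can}$, $\lambda_{\max}=1$ and $k$, and checks the sign numerically.

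\textbf{Main obstacle.} The delicate step is the chain of inequalities in (i). Result-1 bounds $Tr(AB)$ by $\lambda_{\max}(A)\,Tr(B)$, so the factor $1/\lambda_{\max}(\rho)$ cancels only if $\rho$ plays the role of $A$ and $C^{can}$ the role of $B$. One also has to take care over signs and absolute values, and over the dimensional mismatch between $C^{can}$ and $\rho$, interpreting $C^{can}$ as embedded as a $d^2\times d^2$ operator as in the CCNR theorem. I would handle this precisely as was done for $W_L^{CCNR}$, so that the inequality (\ref{WCCNR3}) carries over verbatim with $1$ replaced by $d(d-1)/2$. Once that step is in place, the rest follows directly.
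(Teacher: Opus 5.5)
Your overall route is the paper's own. It has the same expectation value, the same use of (\ref{dv1}) and of $k\le 0$ from Result-2, and the same test state $\rho_{AB}^{(1)}$ via $W_L^{DV}=W_L^{CCNR}$ at $d=2$. The gap is in the step you flag as the main obstacle: the intermediate inequality $|Tr(C^{can}\rho_{sep})|\le\lambda_{\max}(\rho_{sep})\,|Tr(C^{can})|$ is false, and Result-1 cannot deliver it. The bound $Tr(AB)\le\lambda_{\max}(A)Tr(B)$ holds for positive semidefinite $B$ but not for general Hermitian $B$; take $A=\mathrm{diag}(1,0)$, $B=\mathrm{diag}(1,-1)$. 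Here $C^{can}$ is not positive semidefinite, and in general it is not even symmetric, since $C_{a0}\neq C_{0a}$ whenever the two marginals differ. Concretely, $\sum_a G_a\otimes G_a$ is the swap operator, so $Tr(C^{can})=Tr(\rho\,\mathrm{SWAP})$, which vanishes for a product of orthogonal pure states. Now take $\ket{a}=\tfrac{\sqrt3}{2}\ket{0}+\tfrac12\ket{1}$, $\ket{b}=\tfrac12\ket{0}-\tfrac{\sqrt3}{2}\ket{1}$ and $\rho=\ket{a}\bra{a}\otimes\ket{b}\bra{b}$. With the ordering $(I,\sigma_x,\sigma_y,\sigma_z)/\sqrt2$ against $\ket{00},\ket{01},\ket{10},\ket{11}$ (the one that reproduces the paper's value $-0.1522$), one finds $Tr(C^{can}\rho)=-9/32\neq 0$. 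The paper's bare citation of Result-1 and Result-6 conceals the same defect, so handling the step ``precisely as for $W_L^{CCNR}$'' does not close it.

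The right tool is the von Neumann (H\"older) trace inequality $|Tr(XY)|\le\|X\|_\infty\|Y\|_1$. With $X=\rho_{sep}$ and $Y=C^{can}$ it gives $|Tr(C^{can}\rho_{sep})|\le\lambda_{\max}(\rho_{sep})\|C^{can}\|_1$ in one step, which is exactly the bound you need. Since $W_L^{DV}$ need not be Hermitian, $Tr(W_L^{DV}\rho)$ can be complex: for the product of $\tfrac{1}{\sqrt2}(\ket{0}+i\ket{1})$ and $\tfrac{1}{\sqrt2}(\ket{0}-i\ket{1})$ one gets $Tr(C^{can}\rho)=-i/4$. So what you actually prove is $\mathrm{Re}\,Tr(W_L^{DV}\rho_{sep})\ge\frac{d(d-1)}{2}-\|C^{can}\|_1-d^2(d+1)^2k\ge 0$; alternatively, replace $C^{can}$ by its symmetric part, whose trace norm is no larger. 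With that repair the rest stands. The bound (\ref{dv1}) for the full $C^{can}$ in fact already follows from CCNR because $\frac{d(d-1)}{2}\ge 1$. There is also no dimensional mismatch: $C^{can}$ is already $d^2\times d^2$, and only an ordering convention is needed. For condition (ii), your $d\ge 3$ alternative improves on the paper's $2\otimes 2$-only example and needs no numerics. For $\frac{1}{\sqrt d}\sum_i\ket{ii}$ one has $\lambda_{\max}=1$, $k=(1+\frac1d)^d-2\ge\frac14$ and $\|C^{can}\|_1=d$. Hence the expectation value is at most $\frac{d(d-1)}{2}+d-\frac{d^2(d+1)^2}{4}<0$ for every $d\ge2$.
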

\begin{proof}
(i) If $\rho_{AB}^{sep}$ represents an arbitrary separable state in $d\otimes d$ dimensional system then the expectation value of the linear operator $W_L^{DV}$ is given by  
\begin{equation}
\begin{split}
Tr(W_L^{DV}\rho_{AB}^{sep})=&\Big(\frac{d(d-1)}{2}-d^2(d+1)^2k\Big)\\
&-\frac{1}{\lambda_{\max}(\rho_{sep})}Tr(C^{can}\rho_{sep})
\end{split}
\label{WDV2}
\end{equation} 
Using the results given in (\ref{result-1}) and (\ref{traceA less norm A}), the expression of the expectation value given in (\ref{WDV2}) reduces to
\begin{equation}
	\begin{split}
		Tr(W_L^{DV}\rho_{AB}^{sep})\geq&\frac{d(d-1)}{2}-||C^{can}||_1-d^2(d+1)^2k\\
		\end{split}
	\label{WDV3}
\end{equation}
Using the inequality $k\leq0$ given in (\ref{det k constant}) and from the DV criterion given in (\ref{dv1}), we get 
\begin{equation}
	Tr(W_L^{DV}\rho_{AB}^{sep})\geq0, \forall~ \rho_{AB}^{sep}
	\label{dvwit1}
\end{equation}
(ii) It can be easily verified that there exist an entangled state $\rho_{AB}^{(1)}$ defined in (\ref{dstate}) for which the expectation value of the operator $W_L^{DV}$ is negative i.e. 
\begin{equation}
	\begin{split}
		Tr(W_L^{DV}\rho_{1})&=-0.152209\\
		&<0
	\end{split}
\label{dvwit2}
\end{equation}
Thus, the linear operator $W_L^{DV}$ satisfies both the properties of a witness operator and thus we may consider $W_L^{DV}$ as a LEW operator.
\end{proof}
It has already been discussed earlier that not a single state belong to the class of state $\rho_{\beta}$ detected by the DV criterion. But we find that the LEW operator $W_L^{DV}$ defined in this section detect a few states from a class of state described by the density operator $\rho_{\beta}$. The linear witness operator detect a state $\rho_{\beta}$ when $\beta \in [0.7308, 0.7889]$.  
\subsubsection{Construction of NLEW Operator Using the LEW Operator $W_L^{DV}$}
In this section, we will construct a non-linear entanglement witness operator denoted by $W_{NL}^{DV}$ using the linear witness operator  $W_{L}^{DV}$ with the motivation that $W_{NL}^{DV}$ perform better than $W_L^{DV}$. The prescription of the construction is given below:\\
\begin{theorem}
	If  $W_L^{DV}$ denote a linear entanglement witness operator defined in (\ref{WDV1}) and $\lambda_{max}(W_L^{DV})$ denote the maximum eigenvalue of $W_L^{DV}$ then the non-linear entanglement witness operator may be defined as  
\begin{equation}
	W_{NL}^{DV}=W_L^{DV}-\frac{1}{\lambda_{\max}(W_L^{DV})}(W_L^{DV})^2
	\label{WNLDV}
\end{equation}
\end{theorem}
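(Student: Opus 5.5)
The plan mirrors the proof of the corresponding statement for $W_{NL}^{CCNR}$ in (\ref{WNLCCNR}): verify condition \textbf{C1} of (\ref{nlwitdef}) with a spectral argument, then exhibit one entangled state with negative expectation value to get \textbf{C2}.

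\textbf{Condition C1.} For an arbitrary separable $\rho_{sep}$ in $d\otimes d$ we have
\begin{equation*}
Tr(W_{NL}^{DV}\rho_{sep})=Tr(W_L^{DV}\rho_{sep})-\frac{1}{\lambda_{\max}(W_L^{DV})}Tr\big((W_L^{DV})^2\rho_{sep}\big).
\end{equation*}
I would first note that $\lambda_{\max}(W_L^{DV})>0$. This holds because $Tr(W_L^{DV}\rho_{sep})\geq 0$ by (\ref{dvwit1}) for, e.g., $\rho_{sep}=I_{d^2}/d^2$, so the trace of $W_L^{DV}$ is nonnegative; moreover $W_L^{DV}$ is not negative semidefinite. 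The idea is to compare the two terms eigenvalue by eigenvalue. Write $W_L^{DV}=\sum_i \mu_i P_i$ in its spectral decomposition, and set $w_i=Tr(P_i\rho_{sep})\ge 0$, so that $\sum_i w_i=1$. Then the expectation becomes
\begin{equation*}
\sum_i w_i\,\mu_i\Big(1-\frac{\mu_i}{\lambda_{\max}}\Big).
\end{equation*}
For $\mu_i\ge 0$ each term is nonnegative, since $\mu_i\le\lambda_{\max}$.

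\textbf{Main obstacle.} The eigenvalues $\mu_i<0$ make their terms negative, so Result-1 alone (which the paper cites) does not close the argument. I would first try the mixed bound $Tr((W_L^{DV})^2\rho)\le\lambda_{\max}(W_L^{DV})\,Tr(W_L^{DV}\rho)$. This is valid only when $W_L^{DV}\ge 0$ on the support of $\rho$, so in general it fails. The fix I would pursue is to replace $\lambda_{\max}$ by the spectral radius $\|W_L^{DV}\|_\infty$, or to restrict to the regime used in the proof for $W_{NL}^{CCNR}$, following the paper's Result-1 reasoning. Either way I would flag this step as the delicate point of the argument.

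\textbf{Condition C2.} I would reuse $\rho_{AB}^{(1)}$ from (\ref{dstate}). For $d=2$ the DV and CCNR bounds coincide ($d(d-1)/2=1$), so $W_L^{DV}=W_L^{CCNR}$ for this state. Hence the quantities in (\ref{dstatevalues}) carry over directly: $Tr(W_L^{DV}\rho_{AB}^{(1)})=-0.152209$, $Tr((W_L^{DV})^2\rho_{AB}^{(1)})=0.304097$ and $\lambda_{\max}=0.161736$. These give
\begin{equation*}
Tr(W_{NL}^{DV}\rho_{AB}^{(1)})=-2.0317<0.
\end{equation*}
Combined with \textbf{C1}, this shows that $W_{NL}^{DV}$ is a NLEW operator.
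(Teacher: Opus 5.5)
\textbf{Condition C2 matches the paper.} For $d=2$ one has $\frac{d(d-1)}{2}=1$, so $W_L^{DV}=W_L^{CCNR}$, and the values in (\ref{dstatevalues}) for $\rho_{AB}^{(1)}$ carry over.

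\textbf{The gap is Condition C1, which you flag but do not prove.} You are right that Result-1 cannot deliver it. That is exactly the step the paper relies on, since it proves this theorem by referring to the argument for $W_{NL}^{CCNR}$.
\begin{itemize}
\item Result-1 only gives $Tr((W_L^{DV})^2\rho_{sep})\le\lambda_{\max}((W_L^{DV})^2)$.
\item For $\lambda_{\max}(W_L^{DV})>0$, C1 is equivalent to $Tr((W_L^{DV})^2\rho_{sep})\le\lambda_{\max}(W_L^{DV})\,Tr(W_L^{DV}\rho_{sep})$.
\item The natural source for this is the operator inequality $(W_L^{DV})^2\le\lambda_{\max}(W_L^{DV})\,W_L^{DV}$. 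For Hermitian $W_L^{DV}$, this holds if and only if $W_L^{DV}\ge 0$.
\end{itemize}
Neither repair you suggest supplies this. Replacing $\lambda_{\max}$ by $\|W_L^{DV}\|_\infty$ changes the operator, and the terms $w_i\mu_i(1-\mu_i/\|W_L^{DV}\|_\infty)$ with $\mu_i<0$ are still negative. ``Following the Result-1 reasoning'' is circular.

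Two further steps in your C1 paragraph also do not hold.
\begin{itemize}
\item $W_L^{DV}$ is built from the probed state itself, through $k$, $\lambda_{\max}(\rho_{AB})$ and $C^{can}$. So what you learn from the operator built from $I_{d^2}/d^2$ says nothing about the operator built from another $\rho_{sep}$.
\item $C^{can}$ is in general not symmetric. So $W_L^{DV}$ need not be Hermitian, and the decomposition $\sum_i\mu_iP_i$ with weights $w_i\ge 0$ is not available.
\end{itemize}

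\textbf{The missing step is not merely delicate; it fails.} So no argument along these lines can close it.

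For an ordinary (state-independent) witness, the map $W\mapsto W-W^2/\lambda_{\max}(W)$ can already destroy C1. For $W_L^{(\psi^-)}$ and the product state $|00\rangle\langle 00|$, one gets $0-\frac{1/4}{1/2}=-\frac12$.

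For $W_L^{DV}$ itself, take $d=2$ with the conventions that reproduce the paper's numbers for $\rho_{AB}^{(1)}$: basis $G_a=\sigma_a/\sqrt2$ in the order $I,\sigma_x,\sigma_y,\sigma_z$. Consider the product state $\rho=|1\rangle\langle 1|\otimes\mathrm{diag}(\frac34,\frac14)$.
\begin{itemize}
\item $k=0$ and $\lambda_{\max}(\rho)=\frac34$.
\item $W_L^{DV}=I_4-\frac43C^{can}$ is a non-symmetric matrix with eigenvalues $1,1,1,\frac23$, so $\lambda_{\max}(W_L^{DV})=1$.
\item $Tr(W_L^{DV}\rho)=\frac{13}{12}\ge 0$, but $Tr(W_{NL}^{DV}\rho)=-\frac{1}{18}<0$.
\end{itemize}
Since $d=2$, the same example also breaks the $W_{NL}^{CCNR}$ statement.

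Thus C1 fails on a separable state, for the paper's proof as much as for yours. The construction itself would have to be changed before C1 could be proved. At minimum, $W_L^{DV}$ would need to be made Hermitian and shown to be positive semidefinite whenever it is built from a separable state.
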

\begin{proof}
The proof of Theorem 7 is similar to the proof of Theorem 6.
\end{proof}
If we again recalling the state $\rho_{\beta}$ defined in (\ref{betastate}) then we can find that the NLEW operator $W_{NL}^{DV}$ detect more states than LEW operator $W_{L}^{DV}$. The witness operator $W_{NL}^{DV}$ detect the state $\rho_{\beta}$ when the parameter $\beta \in [0.7308,0.8096]$.\\
Therefore, one may conclude in this section that although NLEW operator constructed directly from LEW operator and indirectly from the DV separability criterion but NLEW operator may serve as an efficient entanglement witness operator in comparison to both the LEW operator and the DV separability criterion.
\section{Construction Of NLEW Operator for  PPTES in $2 \otimes d_{2}$ $(d_{2}> 3)$ and $d_{1} \otimes d_{2}$  $(d_{1}> 2, d_{2}\geq 3)$ dimensional system and analyzing its efficiency}
In the previous section, we have constructed NLEW operator and studied their efficiencies but those constructed NLEW operators work for the detection of NPTES only and almost fails to detect PPTES. The statement that the NLEW operator $W_{NL}^{(1)}$ fails to detect any PPTES is true due to the following reasons: (i) The quantity $det(I_{d_{2}}+Tr_{A}(\rho_{AB}))-det(I_{d_{1}d_{2}}+\rho_{AB} < 0$ for all PPT state and (ii) we assume that the used LEW operator in the construction of $W_{NL}^{(1)}$ is a decomposable witness operator.\\
Instead, if the LEW is a non-decomposable witness operator then also we find that there may be a little chance that $W_{NL}^{(1)}$ detect the PPTES. The possible reason behind this is due to the structure of the operator $W_{NL}^{(1)}$.\\
In a similar manner, one can check that the NLEW operator $W_{NL}^{(2)}$ and $W_{NL}^{(3)}$ mainly take part in the detection of $NPTES$. Further, we note that the NLEW operator $W_{NL}^{CCNR}$ and $W_{NL}^{DC}$ also detects only NPTES. Thus, it is necessary to discuss the construction of NLEW operator for the detection of PPTES separately, which we would do in the next section.  
\subsection{Construction of NLEW operator for the detection of PPTES}
In this section, our aim is to construct a NLEW operator that may be used mainly to detect PPTES but they may detect NPTES also.\\
\begin{theorem}
 Let us consider a bipartite quantum state described by the density operator $\rho_{AB}$ in $d_{1} \otimes d_{2}$ dimensional system. If $W_L$ is a LEW operator  in $d_1\otimes d_2$ dimensional system  and $\underset{\rho_{AB}^{sep}}{max} (Tr(W_L^{2}\rho_{AB}^{sep}))$ gives the maximum value of $Tr(W_L^{2}\rho_{AB}^{sep})$ over all the separable states $\rho_{AB}^{sep}$ and $I_{d_1d_2}$ is an identity matrix of order $d_{1}d_{2}$ then the NLEW operator may be constructed as
\begin{equation}
\begin{split}
	W_{NL}^{(4)}&=\underset{\rho_{AB}^{sep}}{max} (Tr(W_L^{2}\rho_{AB}^{sep}))I_{d_{1}d_{2}}-W_{L}^2+\\
	&\Bigg(det(\rho_{AB})-\Big(\frac{det(Tr_{A}(\rho_{AB}))}{d_1}\Big)^{d_{1}}\Bigg)W_{L}
\end{split}
\label{NPTES AND PPTES WO}
\end{equation}
\end{theorem}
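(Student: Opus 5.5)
We follow the same two-step template as the earlier theorems: first verify condition \textbf{C1} for every separable state, then exhibit a concrete (preferably PPT) entangled state for which \textbf{C2} holds.

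For \textbf{C1}, take an arbitrary separable $\rho_{AB}^{sep}$ and write
\begin{equation*}
\begin{split}
Tr(W_{NL}^{(4)}\rho_{AB}^{sep})=&\Big[\underset{\sigma^{sep}}{max}\,Tr(W_L^{2}\sigma^{sep})-Tr(W_L^{2}\rho_{AB}^{sep})\Big]\\
&+\Big(det(\rho_{AB}^{sep})-\Big(\tfrac{det(Tr_A(\rho_{AB}^{sep}))}{d_1}\Big)^{d_1}\Big)Tr(W_L\rho_{AB}^{sep}).
\end{split}
\end{equation*}
Here $Tr(\rho_{AB}^{sep})=1$, so the identity term contributes exactly the maximum. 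The first bracket is nonnegative by the definition of the maximum over separable states. In the second term, the scalar factor is $\le 0$ by Result-4 (\ref{det inequality from lin}), since every separable state is positive semidefinite. The factor $Tr(W_L\rho_{AB}^{sep})$ is $\ge 0$ because $W_L$ is a LEW. The second term is therefore a product of a nonpositive number and a nonnegative number, hence $\le 0$.

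This sign mismatch is the main obstacle: the two contributions have opposite signs, so the termwise argument used for $W_{NL}^{(1)}$--$W_{NL}^{(3)}$ does not apply directly. The following is the step I would try first, though it needs checking. Bound the magnitude of the negative term using Result-1, which gives
\[
0\le Tr(W_L\rho^{sep})\le\lambda_{max}(W_L).
\]
Also, $\big(det(Tr_A\rho)/d_1\big)^{d_1}$ is tiny: all eigenvalues are at most $1$ and the trace is $1$, so AM--GM gives at most $d_2^{-d_1d_2}d_1^{-d_1}$. The negative term is thus $O(\lambda_{max}(W_L)\,d_1^{-d_1}d_2^{-d_1d_2})$.

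The difficulty is at the maximizing separable state, where the first bracket vanishes. There I would argue that positivity still holds provided $Tr(W_L\rho^{sep})$ or the determinant gap vanishes. If that argument fails, the fallback is to interpret the maximum as attained with $W_L^2$ and to add a small slack. More honestly, the fallback is to impose the additional hypothesis, implicit in the construction, that the maximum dominates $Tr(W_L^2\rho^{sep})+|(\det\text{-gap})\,Tr(W_L\rho^{sep})|$. I would flag clearly which of these the argument relies on.

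For \textbf{C2}, choose a known PPT entangled state and a non-decomposable LEW $W_L$ in $3\otimes3$ or $2\otimes4$, so that the theorem's claim about PPTES is illustrated. Natural choices are the Horodecki $2\otimes4$ bound entangled state or a $3\otimes3$ Horodecki/UPB state, with $W_L$ a Choi-type or Wudarski-type witness. The computation then has three parts. First, compute $\max_{\rho^{sep}}Tr(W_L^2\rho^{sep})$. For a product-state maximization this is the maximal value of $\bra{ab}W_L^2\ket{ab}$, and I would upper bound it by $\lambda_{max}(W_L^2)$ if an exact value is hard. Second, compute $Tr(W_L^2\rho)$, $Tr(W_L\rho)$ and the two determinants for the chosen state. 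Since these states often have rank deficiency, $det(\rho)=0$ and the coefficient becomes $-\big(det(Tr_A\rho)/d_1\big)^{d_1}$. Third, show that the resulting expression is negative on some parameter range. Negativity needs $Tr(W_L^2\rho)$ close to the separable maximum and $Tr(W_L\rho)$ positive enough that the determinant term dominates; I would tune the state's parameter numerically to find such a range.
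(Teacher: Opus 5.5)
Your sign bookkeeping for C1 is correct, and it identifies exactly where the argument breaks. On separable states $h_1\ge 0$, but $h_2\,Tr(W_L\rho^{sep})\le 0$. The paper gets past this by writing $h_1+h_2Tr(W_L\rho^{sep})\ge h_1+h_2\lambda_{\min}(W_L)$ and then using $h_2\le 0$ and $\lambda_{\min}(W_L)<0$. But multiplying $Tr(W_L\rho^{sep})\ge\lambda_{\min}(W_L)$ by $h_2\le 0$ reverses the inequality, so that step is invalid, and you were right not to copy it.

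The gap in your proposal is that nothing replaces that step, and nothing can. Your hoped-for rescue is that at a maximiser of $Tr(W_L^2\rho^{sep})$ either $Tr(W_L\rho^{sep})$ or $h_2$ vanishes. This holds at pure product maximisers, where $\det\rho=\det(Tr_A\rho)=0$, but it fails at mixed ones. In fact C1 is false already for the paper's own witness. Take $W_L^p$ from (\ref{WLP WO}) and the separable state $\sigma=\frac12(\ket{01}\bra{01}+\ket{10}\bra{10})$:
\begin{itemize}
\item $Tr((W_L^p)^2\sigma)=\frac14(2-2p+p^2)$, which is the separable maximum, so $h_1=0$;
\item $\det\sigma=0$ and $Tr_A\sigma=I_2/2$, so $h_2=-\frac{1}{64}$;
\item $Tr(W_L^p\sigma)=\frac{1-p}{2}$.
\end{itemize}
Hence $Tr(W_{NL}^{(4)}\sigma)=-\frac{1-p}{128}<0$ for every $0<p<1$. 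At $p=1$ we have $W_L^1=\frac12 F$, with $F$ the swap operator. Then $(W_L^1)^2=\frac14 I_4$, so $W_{NL}^{(4)}=h_2(\rho_{AB})\,W_L^1$, and the maximally mixed state gives $-\frac{3}{1024}$. So no argument valid for an arbitrary LEW $W_L$ can establish C1. Any C2 example you choose would also need C1 checked directly for that particular $W_L$.

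Your fallbacks do not repair this. Assuming that the maximum dominates $Tr(W_L^2\rho^{sep})+|h_2\,Tr(W_L\rho^{sep})|$ is just assuming C1.

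Adding the slack $s=d_1^{-d_1}d_2^{-d_1d_2}\lambda_{\max}(W_L)$ to the identity coefficient does make C1 true, by your AM--GM bound. However, the same bound gives $Tr(W_{NL}\rho)\ge h_1(\rho)$ for every state. So the modified operator detects nothing that the linear witness $\max_{\rho^{sep}}Tr(W_L^2\rho^{sep})\,I_{d_1d_2}-W_L^2$ does not already detect. For example, the paper's $\rho_{ent}$ would no longer be detected.

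Your C2 plan exposes the same defect from the other side. You want negativity from $h_1\approx 0$ together with $Tr(W_L\rho)>0$ and a nonzero determinant gap. Separable states such as $\sigma$ realise exactly that combination, so a negative value obtained this way certifies nothing. Indeed, the paper's own C2 computation with $\rho_{ent}$ is negative only for $p\gtrsim 0.985$, a range in which $\sigma$ (or $I_4/4$ at $p=1$) is flagged as well.

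If you do try a $3\otimes 3$ example, maximise over pure product vectors. For $W_L^c$, the product vector $\ket{01}$ is an eigenvector with eigenvalue $\frac{9}{33}$. So the separable maximum of $Tr((W_L^c)^2\rho)$ is $\frac{81}{1089}=\lambda_{\max}((W_L^c)^2)$, not the appendix value $0.040155$.
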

\begin{proof}
Let us consider an arbitrary separable state described by the density operator $\rho_{AB}^{sep}$ in $d_1\otimes d_2$ dimensional  system. Then the expectation of $W_{NL}^{(4)}$ with respect to $\rho_{AB}^{sep}$ is given by 
\begin{equation}
\begin{split}
	Tr(W_{NL}^{(4)}\rho_{AB}^{sep})&=\underset{\rho_{AB}^{sep}}{max} (Tr(W_L^{2}\rho_{AB}^{sep}))-Tr(W_{L}^2\rho_{AB}^{sep})+\\
	&\Bigg(det(\rho_{AB}^{sep})-\Big(\frac{det(Tr_{A}(\rho_{AB}^{sep}))}{d_1}\Big)^{d_{1}}\Bigg) \times \\& Tr(W_{L}\rho_{AB}^{sep})
\end{split}
\label{NPTES AND PPTES Tr}
\end{equation}
Let us define the two quantities $h_{1}$ and $h_{2}$ as
\begin{align}
h_1=	\underset{\rho_{AB}^{sep}}{max}(Tr(W_L^{2}\rho_{AB}^{sep}))-Tr(W_L^{2}\rho_{AB}^{sep})
\label{h1}
\end{align} 
\begin{align}
h_2=\Bigg(det(\rho_{AB}^{sep})-\Big(\frac{det(Tr_{A}(\rho_{AB}^{sep}))}{d_1}\Big)^{d_{1}}\Bigg)
\label{h2}
\end{align}
Using (\ref{h1}) and (\ref{h2}), the expression for $Tr(W_{NL}^{(4)}\rho_{AB}^{sep})$ given in (\ref{NPTES AND PPTES Tr}) may be re-expressed as
\begin{equation}
\begin{split}
	Tr(W_{NL}^{(4)}\rho_{AB}^{sep})&=h_{1} +h_{2}Tr(W_{L}\rho_{AB}^{sep})\\& \geq h_1+h_2\lambda_{min}(W_{L})
\end{split}
\label{NPTES AND PPTES Last equation}
\end{equation}
In the second line, we have used Result-1 given in (\ref{result-1}).\\
Using the inequality given in (\ref{det inequality from lin}), we find that $h_{2}\leq 0$. Since $W_{L}$ is an EW operator so it must have atleast one negative eigenvalue so $\lambda_{min}(W_{L})<0$. Also, we have $h_{1}\geq 0$. Therefore, the expectation value given in (\ref{NPTES AND PPTES Last equation}) must satisfy the following inequality
\begin{equation}
Tr(W_{NL}^{(4)}\rho_{AB}^{sep})\geq0,~~\forall~ \rho_{AB}^{sep}
\end{equation}
In the next step, let us consider 2$\otimes2$ NPTES  defined in (\ref{rho_entcons.}) as
\begin{align}
	\rho_{ent}=\begin{pmatrix}
		\frac{13}{30}&0&0&\frac{11}{30}\\
		0&\frac{1}{15}&0&0\\
		0&0&\frac{1}{15}&0\\
		\frac{11}{30}&0&0&\frac{13}{30}
	\end{pmatrix}
\label{rhoent}
\end{align}
If we consider a LEW operator given in (\ref{WLP WO}) then we find that $Tr(W_{L}^p\rho_{ent})>0, ~\forall~0<p\leq1$ i.e $W_{L}^p$ can not identify the state $\rho_{ent}$ given in (\ref{rhoent}). But we can construct a NLEW operator defined in (\ref{NPTES AND PPTES WO}) using the quantities given below:
\begin{equation}
	\begin{split}
		\underset{\rho_{AB}^{sep}}{\max}(Tr((W_L^p)^{2}\rho_{AB}^{sep}))&=\frac{1}{4}(2-2p+p^2)\\ \Bigg(det(\rho_{ent})-\Big(\frac{det(Tr_{A}(\rho_{ent}))}{2}\Big)^{2}\Bigg)&=-\frac{16619}{1080000}
	\end{split}
\end{equation}
Therefore, the expectation value of the NLEW operator $W_{NL}^{(4)}$ w.r.t the state $\rho_{ent}$ is given by
\begin{equation}
	Tr(W_{NL}^{(4)}\rho_{ent})<0, \text{for some $p$,}~ 0<p\leq1
\end{equation}
Hence, $W_{NL}^{(4)} $ is a well defined NLEW operator.
\end{proof}
\subsection{Efficiency of NLEW operator $W_{NL}$ over a linear witness operator}
\noindent In this section, we will give two examples of PPTES in $3\otimes 3$ dimensional system, which is detected by the NLEW operator defined in (\ref{NPTES AND PPTES WO}).\\
\textbf{Example-7a} Let us consider a two-qutrit
PPTES defined as \cite{Majumdar_2021}:
\begin{align}
\rho_{x}=\frac{1}{3(1+x+\frac{1}{x})}\begin{pmatrix}
	1&0&0&0&1&0&0&0&1\\
	0&x&0&0&0&0&0&0&0\\
	0&0&\frac{1}{x}&0&0&0&0&0&0\\
	0&0&0&\frac{1}{x}&0&0&0&0&0\\
	1&0&0&0&1&0&0&0&1\\
	0&0&0&0&0&x&0&0&0\\
	0&0&0&0&0&0&x&0&0\\
	0&0&0&0&0&0&0&\frac{1}{x}&0\\
	1&0&0&0&1&0&0&0&1
\end{pmatrix}
\label{rhox}
\end{align}
where $x$ is a positive real number.\\
Further, let us again consider a LEW operator defined in (\ref{WLCWO}) as
\begin{align}
	W_{L}^{c}=\frac{1}{33}\begin{pmatrix}
		1&0&0&0&-1&0&0&0&-1\\
		0&9&0&0&0&0&0&0&0\\
		0&0&1&0&0&0&0&0&0\\
		0&0&0&1&0&0&0&0&0\\
		-1&0&0&0&1&0&0&0&-1\\
		0&0&0&0&0&9&0&0&0\\
		0&0&0&0&0&0&9&0&0\\
		0&0&0&0&0&0&0&1&0\\
		-1&0&0&0&-1&0&0&0&1\\
	\end{pmatrix}
\label{wlc}
\end{align} 
 For the qutrit state (\ref{rhox}), we find that $W_{L}^c$ is not able to identify this state $\rho_{x}$ i.e $Tr(W_{L}^{c}\rho_{x})\geq0,~\forall x>0$.\\
 Using LEW operator (\ref{wlc}), let us calculate the following quantities that would be needed to construct NLEW operator defined in (\ref{NPTES AND PPTES WO}). 
\begin{equation}
\begin{split}
	\underset{\rho_{sep}}{\max}(Tr((W_L^c)^{2}\rho_{sep}))&=0.040155\\
	Tr((W_{L}^{c})^{2}\rho_{x})&=\frac{1+x+81x^2}{1089(1+x+x^2)}\\
	Tr(W_{L}^{c}\rho_{x})&=\frac{1 + x (-1 + 9 x)}{33 (1 + x + x^2)}\\
	\Bigg(det(\rho_{x})-\Big(\frac{det(Tr_{A}(\rho_{x}))}{3}\Big)^{3}\Bigg)&=-\frac{1}{531441}
\end{split}
\label{x state values}
\end{equation}
Using (\ref{NPTES AND PPTES WO}) and (\ref{x state values}), the expectation value of NLEW operator $W_{NL}^c$ with respect to the state $\rho_{x}$ is given by 
\begin{equation}
\begin{split}
	Tr(W_{NL}^c\rho_{x})&=\frac{-9329841 + 1574640\sqrt{3}+\frac{2(7085924 + 7085935 x)}{1 + x + x^2}}{192913083}\\
	&<0,~~~x\geq1.79
\end{split}
\end{equation}
Therefore, the NLEW operator $W_{NL}^{c}$ detect the state $\rho_{x}$ as an entangled state when $x\geq1.79$.\\ 
\textbf{Example-7b} Let us consider a  bipartite quantum state described by the density operator $\rho_\gamma$, which is defined as \cite{Horodecki_1999}:
\begin{align}
\rho_\gamma=	\frac{2}{7}\ket{\psi^+}\bra{\psi^+}+\frac{\gamma}{7}\sigma_{+}+\frac{5-\gamma}{7}\sigma_{-},~~2\leq\gamma\leq5
\end{align}
where $\ket{\psi^+}=\frac{1}{\sqrt{3}}(\ket{00}+\ket{11}+\ket{22}) $ and
\begin{align}
\begin{split}
	\sigma_{+}&=\frac{1}{3}(\ket{01}\bra{01}+\ket{12}\bra{12}+\ket{20}\bra{20})\\
	\sigma_{-}&=\frac{1}{3}(\ket{10}\bra{10}+\ket{21}\bra{21}+\ket{02}\bra{02})
\end{split}
\end{align}
For different values of the parameter $\gamma$, the state $\rho_\gamma$ may be represented as
\begin{equation*}
	\begin{cases}
	\text{Separable}~~ when~~2\leq\gamma\leq3\\
	\text{PPTES}~~ when~~3<\gamma\leq4\\
	\text{NPTES}~~ when~~4<\gamma\leq5
\end{cases}
\end{equation*} 	
One can easily calculate the following
\begin{align}
	\begin{split}
		Tr(W_{L}^{c}\rho_{\gamma})=\frac{3+8\gamma}{231}
		\geq 0,~2\leq\gamma\leq5
	\end{split}
\end{align}
The above inequality implies that $W_{L}^{c}$ cannot identify any member of the family of entangled state. Further, let us calculate the following quantities to get the expectation value of NLEW operator $W_{NL}^{c}$.
\begin{equation}
\begin{split}
	\underset{\rho_{sep}}{\max}(Tr((W_L^c)^{2}\rho_{sep}))&=0.040155\\
	Tr((W_{L}^{c})^{2}\rho_\gamma)&=\frac{7+80\gamma}{7623}\\
	Tr(W_{L}^{c}\rho_{\gamma})&=\frac{3+8\gamma}{231}\\
	\Bigg(det(\rho_{\gamma})-\Big(\frac{det(Tr_{A}(\rho_{\gamma}))}{3}\Big)^{3}\Bigg)&=-\frac{1}{531441}
\end{split}
\label{gamma state values}
\end{equation}
Using (\ref{gamma state values}), the expectation value of the NLEW operator $W_{NL}^{c}$ given in (\ref{NPTES AND PPTES WO}) with respect to the state $\rho_{\gamma}$ can be calculated as
\begin{equation}
\begin{split}
	Tr(W_{NL}^c\rho_{\gamma})&=\frac{33894093 + 11022480\sqrt{3}-14171848\gamma}{1350391581}\\
	&<0,~~3.74\leq\gamma\leq5
\end{split}
\end{equation}
Thus, NLEW operator $W_{NL}^c$ can detect all NPTES as well as many PPTES of the family described by the density operator $\rho_{\gamma}$.
\\
\section{Decomposition of NLEW operators in terms of the tensor product of local observables}
In this section we will be showing that how NLEW that we have constructed can be written as a tensor product of local observables. It is required to show that the constructed NLEW operator can be implemented in an experimental set up.
\subsection{For $2\otimes2$ dimensional system}
To proceed toward our aim, we recall a LEW operator $W_{L}^{p}$ defined in (\ref{WLP WO}) for the construction of NLEW operator defined in (\ref{WNL1}).\\ 
Let us start with the operator form of Pauli matrices, which can be expressed as
\begin{equation}
\begin{split}
	&\sigma_{x}=\ket{0}\bra{1}+\ket{1}\bra{0},~	\sigma_{y}=-\iota\ket{0}\bra{1}+i\ket{1}\bra{0}\\
	&\sigma_{z}=\ket{0}\bra{0}-\ket{1}\bra{1},~I_{2}=\ket{0}\bra{0}+\ket{1}\bra{1}
\end{split}
\label{pauli matrices}
\end{equation}
Expressing (\ref{pauli matrices}) in an another way, we get 
\begin{equation}
	\begin{split}
	&\ket{1}\bra{0}=\frac{\sigma_{x}-\iota\sigma_{y}}{2},~~\ket{0}\bra{1}=\frac{\sigma_{x}+\iota\sigma_{y}}{2}\\
	&\ket{0}\bra{0}=\frac{I_{2}+\sigma_{z}}{2},~~\ket{1}\bra{1}=\frac{I_{2}-\sigma_{z}}{2}
	\end{split}
	\label{ketbra values}
\end{equation}
The LEW operator $W_{L}^p$ may be expressed in the computational basis as 
\begin{equation}
	\begin{split}
		W_{L}^p=&\frac{1}{2}\Big[p\Big(\ket{0}\bra{0}\otimes \big(\ket{0}\bra{0}-\ket{1}\bra{1}\big)+\\
		&\ket{1}\bra{1}\otimes\big(\ket{1}\bra{1}-\ket{0}\bra{0}\big)\Big)\\
		&+\Big(\ket{0}\bra{0}\otimes\ket{1}\bra{1}+\ket{0}\bra{1}\otimes\ket{1}\bra{0}+\\
		&\ket{1}\bra{0}\otimes\ket{0}\bra{1}+\ket{1}\bra{1}\otimes\ket{0}\bra{0}\Big)\Big]
	\end{split}
	\label{WLPIN KETBRA2}
\end{equation}
In terms of Pauli operators, the LEW operator given in (\ref{WLPIN KETBRA2}) may be re-expressed as
\begin{equation}
	W_{L}^p=\frac{1}{4}\big[(2p-1)\sigma_{z}\otimes\sigma_{z}+\sigma_{x}\otimes\sigma_{x}+\sigma_{y}\otimes\sigma_{y}+I_{2}\otimes I_{2}\big]
	\label{WLP DECOMPOSITON}
\end{equation}
The operator $(W_{L}^p)^{2}$ may be expressed in terms of Pauli operators as
\begin{equation}
	\begin{split}
		(W_{L}^p)^2&=\frac{1}{4}\big((p-1)\sigma_{z}\otimes\sigma_{z}+(1+p^2-p)I_{2}\otimes I_{2}+\\
		&2(1-p)(\sigma_{x}\otimes\sigma_{x}+\sigma_{y}\otimes\sigma_{y})\big)
	\end{split}
	\label{WLP2 DECOMPOSITON}
\end{equation}
Therefore, using (\ref{WLP DECOMPOSITON}) and (\ref{WLP2 DECOMPOSITON}), the NLEW operator $W_{NL}^{(1)}$ defined in (\ref{WNL1}) can also be expressed in terms of the Pauli operators as
\begin{equation}
	\begin{split}
		W_{NL}^{(1)}=&\frac{37p-21}{80}\sigma_{z}\otimes\sigma_{z}+\frac{13-5p}{40}(\sigma_{x}\otimes\sigma_{x}+\\
		&\sigma_{y}\otimes\sigma_{y})+\bigg(\frac{5p^2-5p+21}{80}-\\
		&\big(det(I_{2}+Tr_{A}(\rho_{AB}))-det(I_{4}+\rho_{AB})\big)\bigg)I_{2}\otimes I_{2}
	\end{split}
\end{equation}
Since the NLEW operator $W_{NL}^{(1)}$ is expressed in terms of the tensor product of the local observables so it can be implemented in the laboratory.

\subsection{For $3\otimes3$ dimensional system}
In this case we will be considering the witness operator $W_{L}^c$ defined in (\ref{WLCWO}).\\
For $3\otimes3$ dimensional system, the operator form of Gell–Mann matrices in the computational basis can be expressed as
\begin{equation}
	\begin{split}
		&	\tau_{1}=\ket{0}\bra{1}+\ket{1}\bra{0},~~~~~\tau_{2}=\ket{0}\bra{2}+\ket{2}\bra{0}\\
		&\tau_{3}=\ket{1}\bra{2}+\ket{2}\bra{1},~~~~~\tau_{4}=-\iota\ket{0}\bra{1}+\iota\ket{1}\bra{0}\\
		&\tau_{5}=-\iota\ket{0}\bra{2}+\iota\ket{2}\bra{0},~\tau_{6}=-\iota\ket{1}\bra{2}+\iota\ket{2}\bra{1}\\
		&\tau_{7}=\ket{0}\bra{0}-\ket{1}\bra{1},~~~~~~\tau_{8}=\frac{1}{\sqrt{3}}\ket{0}\bra{0}+\ket{1}\bra{1}-2\ket{2}\bra{2}\\
		&I_3=\ket{0}\bra{0}+\ket{1}\bra{1}+\ket{2}\bra{2}
	\end{split}
	\label{gell mann1}
\end{equation}
Equation (\ref{gell mann1}) can also be re-expressed as
\begin{equation}
	\begin{split}
		&	\ket{1}\bra{0}=\frac{\tau_1-\iota\tau_4}{2},	\ket{0}\bra{1}=\frac{\tau_1+\iota\tau_4}{2},\\
		&\ket{2}\bra{1}=\frac{\tau_3-\iota\tau_6}{2},\ket{1}\bra{2}=\frac{\tau_3+\iota\tau_6}{2}\\
		&\ket{2}\bra{0}=\frac{\tau_2-\iota\tau_5}{2},\ket{0}\bra{2}=\frac{\tau_2+\iota\tau_5}{2}\\
		&\ket{0}\bra{0}=\frac{1}{6}(3\tau_7+\sqrt{3}\tau_8+2I_3)\\
		&\ket{1}\bra{1}=\frac{1}{6}(2I_3+\sqrt{3}\tau_8-3\tau_7)\\
		&\ket{2}\bra{2}=\frac{1}{3}(I-\sqrt{3}\tau_8)
	\end{split}
	\label{gell mann2}
\end{equation}
In terms of computational basis, the LEW operator $W_{L}^c$ can be written as:
\begin{equation}
	\begin{split}
		W_{L}^c&=\frac{1}{33}\big[\ket{0}\bra{0}\otimes(\ket{0}\bra{0}+\ket{2}\bra{2}+9\ket{1}\bra{1})+\\
		&\ket{1}\bra{1}\otimes(\ket{0}\bra{0}+\ket{1}\bra{1}+9\ket{2}\bra{2})+\\
		&\ket{2}\bra{2}\otimes(\ket{1}\bra{1}+\ket{2}\bra{2}+9\ket{0}\bra{0})-\\
		&\ket{0}\bra{1}\otimes\ket{0}\bra{1}-\ket{0}\bra{2}\otimes\ket{0}\bra{2}-\ket{1}\bra{0}\otimes\ket{1}\bra{0}-\\
		&\ket{1}\bra{2}\otimes\ket{1}\bra{2}-\ket{2}\bra{0}\otimes\ket{2}\bra{0}-\ket{2}\bra{1}\otimes\ket{2}\bra{1}\big]	
	\end{split}
	\label{wlc comptutational}
\end{equation}
Further, the LEW operator $W_{L}^c$ can be expressed in terms of Gell-Mann matrices as
\begin{equation}
	\begin{split}
		W_{L}^c&=\frac{1}{33}\bigg[\frac{11}{3}I_3\otimes I_3-\frac{1}{2}\sum_{i=1}^{3}\tau_i \otimes\tau_i+\frac{1}{2}\sum_{i=4}^{6}\tau_i \otimes\tau_i\\
		&-2\sum_{i=7}^{8}\tau_i \otimes\tau_i+2\sqrt{3}(\tau_7\otimes\tau_8-\tau_8\otimes\tau_7)\bigg]
	\end{split}
	\label{wlc gell mann1}
\end{equation}
Using (\ref{wlc gell mann1}) and the expression of $(W_{L}^c)^2$
in terms of Gell-Mann matrices, we can express the NLEW operator $W_{NL}^{(1)}$ in terms of Gell-Mann matrices as
\begin{equation}
	\begin{split}
		W_{NL}^{(1)}=&\frac{1}{33}\bigg[-\frac{2768}{1485}\sum_{i=7}^{8}\tau_i \otimes\tau_i+\frac{2683}{5940}\big(\sum_{i=4}^{6}\tau_i \otimes\tau_i-\\
	&	\sum_{i=1}^{3}\tau_i \otimes\tau_i\big)+\frac{30503(\sqrt{3})}{16335}(\tau_7\otimes\tau_8-\tau_8\otimes\tau_7)\bigg]+\\
		&\bigg[\frac{30253}{294030}-\big(det(I_{2}+Tr_{A}(\rho_{AB}))-\\
		&det(I_{4}+\rho_{AB})\big)\bigg]{I_3}\otimes{I_3}
	\end{split}
\end{equation}
Thus, the NLEW operator $W_{NL}^(1)$ can be expressed as tensor product of local observables in $3\otimes3$ dimensional system also hence we can conclude that our framework can be implemented in an experimental settings.
\section{Conclusion}
To summarize, we systematically addressed one of the fundamental problems in quantum information theory, which is, how to reliably detect entanglement in bipartite quantum systems. It is well known that constructing entanglement witnesses, mainly linear entanglement witnesses (LEWs) is an arduous task because of its dependence on prior partial knowledge of quantum states. More significantly, even after the construction of LEWs, these witnesses have limited detection ability and often fail to capture a wide range of entangled states within a given family. This limitation emphasizes that no single criterion can detect all the entangled states.\\
Motivated by this limitation, we developed a systematic and constructive framework to generate nonlinear entanglement witnesses (NLEWs) from a given LEW such that it can detect negative-partial-transpose entangled state (NPTES). Here, it may have a slight chance that this construction can also detect positive-partial-transpose entangled state (PPTES), if we assume LEW to be non-decomposable in spite of LEW actually being decomposable. The proposed method not only points out the limitations of linear witnesses, but also improves detection capability. This improvement can be explained geometrically in the following way: a LEW corresponds to a hyperplane that segregates some entangled states from separable states, whereas NLEW forms a curved surface \cite{Karimipour_2025}. Thus, NLEWs constructed from some given LEWs are able to identify a large number of entangled states that remain undetected by LEW's.\\
Furthermore, we showed how we can use a well established entanglement detection criteria's, namely, CCNR and DV criterion to construct LEW's. Then, we extended the LEW's, which are constructed from separability criterion, into NLEWs. In this linear and non linear theoretical foundation of CCNR, we found how efficiently NLEW operators are detecting the entangled states as compared to linear ones, which is however equivalent to CCNR criterion. But in the case of DV criterion, we observed that NLEW constructed obliquely from DV is more efficacious in contrast to DV criterion itself and LEW constructed from DV criterion.\\
Moreover, we proposed another framework of the non-linear improvement that proves to be more versatile and efficient in detecting the positive-partial-transpose entangled state (PPTES) which are hard to recognize with usual methods.\\
Adding to this,  we have shown that the constructed NLEW can be decomposed into the tensor product of local observables which establishes the experimental feasibility of NLEW.
Therefore, we can conclude that the non-linear improvement studied in this work supports the current progress that non-linear techniques can detect entanglement, which the standard linear methods cannot.

\appendix
\section{Detailed calculation of the expression $max_{\rho_{sep}}[Tr(W_L^{2}\rho_{sep})]$ for $2\otimes2$ and $3\otimes 3$ system}
	\subsection{For $2\otimes2$ system}
A general single-qubit quantum state may be represented as $\rho=\frac{1}{2}(I+\vec{r}\vec{\sigma})$ where $\vec{r}=(r_{1},r_{2},r_{3}),~\sum_{1}^{3}r_{i}^{2}\leq1$ and $\vec{\sigma}=(\sigma_{1},\sigma_{2},\sigma_{3})$ are the Pauli's matrices.\\
To start the calculation, let us consider a bipartite product state $\rho_{A}^{(j)} \otimes \rho_{B}^{(j)}$ for any $j=1,2,3,.....$, where
\begin{align}
	\begin{split}
		\rho_{A}^{(j)}=&\frac{1}{2}(I+r_{A_1}^{(j)}\sigma_{1}+r_{A_2}^{(j)}\sigma_{2}+r_{A_3}^{(j)}\sigma_{3}) \\
		\rho_{B}^{(j)}=&\frac{1}{2}(I+r_{B_1}^{(j)}\sigma_{1}+r_{B_2}^{(j)}\sigma_{2}+r_{B_3}^{(j)}\sigma_{3})\\ 
	\end{split}
\end{align}
In matrix form, $\rho_{A}^{(j)} \otimes \rho_{B}^{(j)}$ can be expressed as
	\begin{widetext}
	\begin{align}
		\rho_{A}^{(j)}\otimes\rho_{B}^{(j)}	=\begin{pmatrix}
			\frac{1}{4}( 1+r_{A_3}^{(j)})(1+r_{B_3}^{(j)})&   \frac{1}{4}( 1+r_{A_3}^{(j)})(r_{B_1}^{(j)}-\iota r_{B_2}^{(j)})& \frac{1}{4}(r_{A_1}^{(j)}-\iota r_{A_2}^{(j)})(1+r_{B_3}^{(j)})&\frac{1}{4} (r_{A_1}^{(j)}-\iota r_{A_2}^{(j)})(r_{B_1}^{(j)}-\iota r_{B_2}^{(j)})\\
			\\
			\frac{1}{4}( 1+r_{A_3}^{(j)})(r_{B_1}^{(j)}+\iota r_{B_2}^{(j)})&\frac{1}{4}( 1+r_{A_3}^{(j)})(1-r_{B_3}^{(j)})&\frac{1}{4} (r_{A_1}^{(j)}-\iota r_{A_2}^{(j)})(r_{B_1}^{(j)}+\iota r_{B_2}^{(j)})&\frac{1}{4}(r_{A_1}^{(j)}-\iota r_{A_2}^{(j)})(1-r_{B_3}^{(j)})\\
			\\
			\frac{1}{4}(r_{A_1}^{(j)}+\iota r_{A_2}^{(j)})(1+r_{B_3}^{(j)})&\frac{1}{4} (r_{A_1}^{(j)}+\iota r_{A_2}^{(j)})(r_{B_1}^{(j)}-\iota r_{B_2}^{(j)})&\frac{1}{4}( 1-r_{A_3}^{(j)})(1+r_{B_3}^{(j)})&\frac{1}{4}( 1-r_{A_3}^{(j)})(r_{B_1}^{(j)}-\iota r_{B_2}^{(j)})\\
			\\
			\frac{1}{4} (r_{A_1}^{(j)}+\iota r_{A_2}^{(j)})(r_{B_1}^{(j)}+\iota r_{B_2}^{(j)})&	\frac{1}{4}(r_{A_1}^{(j)}+\iota r_{A_2}^{(j)})(1-r_{B_3}^{(j)})&\frac{1}{4}( 1-r_{A_3}^{(j)})(r_{B_1}^{(j)}+\iota r_{B_2}^{(j)})&\frac{1}{4}( 1-r_{A_3}^{(j)})(1-r_{B_3}^{(j)})
		\end{pmatrix}
	\end{align}
\end{widetext}
Any separable state can be decomposed as
	\begin{equation}
		\rho_{sep}=\sum_{j}^{}p_{j}\rho_{A}^{(j)}\otimes\rho_{B}^{(j)}
		\label{DM}
	\end{equation} 
For any general linear witness operator $W_{L}$ defined in $2 \otimes 2$ system, the expectation value of $(W_{L})^{2}$ with respect to any separable state $\rho_{sep}$ can be given as 
	\begin{equation}
		\begin{split}
			Tr((W_L)^2\rho_{sep})=&p_1 Tr((W_L)^2\rho_{A}^{(1)}\otimes\rho_{B}^{(1)})+\\
			&~p_2 Tr((W_L)^2\rho_{A}^{(2)}\otimes\rho_{B}^{(2)})+\\
			&~p_3 Tr((W_L)^2\rho_{A}^{(3)}\otimes\rho_{B}^{(3)})+\cdots
		\end{split}
		\label{A1 max wlp}
	\end{equation}
Now, if the maximum is taken over all state parameters $r_{A_i}^{(j)},r_{B_i}^{(j)}$ where $i=1,2,3$ and $j=1,2,3,\cdots$ then we have\\
	\begin{equation}
		\begin{split}
			Tr((W_L)^2\rho_{sep})	&\leq p_1\underset{r_{A_i}^{(1)},r_{B_i}^{(1)}}{\max} Tr((W_L)^2\rho_{A}^{(1)}\otimes\rho_{B}^{(1)})+\\
			&~p_2 \underset{r_{A_i}^{(2)},r_{B_i}^{(2)}}{\max} Tr((W_L)^2\rho_{A}^{(2)}\otimes\rho_{B}^{(2)})+\\
			&~p_3 \underset{r_{A_i}^{(3)},r_{B_i}^{(3)}}{\max} Tr((W_L)^2\rho_{A}^{(3)}\otimes\rho_{B}^{(3)})+\cdots
			\label{A2 max WLP}
		\end{split}
	\end{equation}
To go little forward, let us consider a specific form of the linear witness operator and thus recalling the linear witness operator $W_L^p$ given in (\ref{WLP WO}). For our convenience, let us re-writing it again as 
	\begin{align}
		W_{L}^{p}=\frac{1}{2}\begin{pmatrix}
			p&0&0&0\\
			0&1-p&1&0\\
			0&1&1-p&0\\
			0&0&0&p
		\end{pmatrix}, \text{where}~~0<p\leq1
	\label{wlp1}
	\end{align}
Using the operator $W_{L}^{p}$, the inequality (\ref{A2 max WLP}) can be re-expressed as
\begin{equation}
	\begin{split}
		Tr((W_L^{p})^2\rho_{sep})	&\leq p_1\underset{r_{A_i}^{(1)},r_{B_i}^{(1)}}{\max} Tr((W_L^{p})^2\rho_{A}^{(1)}\otimes\rho_{B}^{(1)})+\\
		&~p_2 \underset{r_{A_i}^{(2)},r_{B_i}^{(2)}}{\max} Tr((W_L^{p})^2\rho_{A}^{(2)}\otimes\rho_{B}^{(2)})+\\
		&~p_3 \underset{r_{A_i}^{(3)},r_{B_i}^{(3)}}{\max} Tr((W_L^{p})^2\rho_{A}^{(3)}\otimes\rho_{B}^{(3)})+\cdots
		\label{A2 max WLP1}
	\end{split}
\end{equation}
Squaring the operator $W_{L}^{p}$ given in (\ref{wlp1}) and calculating the $j^{th}$ term of the inequality (\ref{A2 max WLP1}), we get
	\begin{align}
		\begin{split}
			\underset{r_{A_i}^{(j)},r_{B_i}^{(j)}}{\max} Tr((W_L^{p})^2\rho_{A}^{(j)}\otimes\rho_{B}^{(j)})=&\frac{1}{4}\bigg(1+(-1+p)p+\\
			&(1-p)\underset{r_{A_i}^{(j)},r_{B_i}^{(j)}}{\max}\sum_{i=1}^{2}r_{A_i}^{(j)}r_{B_i}^{(j)}\\
			&-(1-p)\underset{r_{A_i}^{(j)},r_{B_i}^{(j)}}{\min}r_{A_3}^{(j)}r_{B_3}^{(j)}\bigg)\\=&\frac{1}{4}(2-2p+p^2)
	\end{split}
	\label{A8}
	\end{align}
Since equation (\ref{A8}) is a valid expression for all $j$, so the inequality (\ref{A2 max WLP1}) reduces to   
\begin{equation}
		\begin{split}
			Tr((W_L^p)^2\rho_{sep})&\leq p_1\frac{1}{4}(2-2p+p^2)+p_2\frac{1}{4}(2-2p+p^2)\\
			&~+p_3\frac{1}{4}(2-2p+p^2)+\cdots\\
			&\leq \frac{1}{4}(2-2p+p^2)\sum_{j}^{}p_j\\
			&= \frac{1}{4}(2-2p+p^2)
		\end{split}
	\end{equation}
In the last step, we use the fact that $\sum_{j}^{}p_j=1$.\\
Therefore, for any arbitrary separable state $\rho_{sep}$, we have
	\begin{equation}
		\underset{\rho_{sep}}{\max}Tr((W_L^p)^2\rho_{sep})=\frac{1}{4}(2-2p+p^2)
	\end{equation}
\subsection{For $ 3\otimes3$ system}
A general single-qutrit quantum state can be represented as $\rho=\frac{1}{3}(I+\vec{b}.\vec{\lambda})$, $\text{where}~ \vec{b}=(b_1,b_2,b_3,b_4,b_5,b_6,b_7,b_8)$ may be chosen in such a way that $\sum_{i=1}^{8}b_i^2\leq1$ and $\vec{\lambda}=(\lambda_{1},\lambda_{2},\lambda_{3},\lambda_{4},\lambda_{5},\lambda_{6},\lambda_{7},\lambda_{8})$, where $\lambda_i's$ denote the Gell Mann matrices.\\
Therefore, a two-qutrit in product state can be given as $\rho_{A}^{(j)}\otimes \rho_{B}^{(j)}$ for any $j=1,2,3,\cdots$, where 
	\begin{align}
		\begin{split}
			\rho_{A}^{(j)}=&\frac{1}{3}(I+b_{A_1}^{(j)}\lambda_{1}+b_{A_2}^{(j)}\lambda_{2}+b_{A_3}^{(j)}\lambda_{3}+b_{A_4}^{(j)}\lambda_{4}+\\
			&b_{A_5}^{(j)}\lambda_{5}+b_{A_6}^{(j)}\lambda_{6}+b_{A_7}^{(j)}\lambda_{7}+b_{A_8}^{(j)}\lambda_{8}) \\
			\rho_{B}^{(j)}=&\frac{1}{3}(I+b_{B_1}^{(j)}\lambda_{1}+b_{B_2}^{(j)}\lambda_{2}+b_{B_3}^{(j)}\lambda_{3}+b_{B_4}^{(j)}\lambda_{4}+\\
			&b_{B_5}^{(j)}\lambda_{5}+b_{B_6}^{(j)}\lambda_{6}+b_{B_7}^{(j)}\lambda_{7}+b_{B_8}^{(j)}\lambda_{8})\\ 
		\end{split}
	\end{align}
For any general linear witness operator $W_{L}$ defined in $3 \otimes 3$ system, the expectation value of $(W_{L})^{2}$ with respect to any separable state $\rho_{sep}$ in $3 \otimes 3$ system can be expressed as 
	\begin{equation}
		\begin{split}
			Tr((W_L)^2\rho_{sep})	&\leq p_1\underset{b_{A_i}^{(1)},b_{B_i}^{(1)}}{\max} Tr((W_L)^2\rho_{A}^{(1)}\otimes\rho_{B}^{(1)})+\\
			&~p_2 \underset{b_{A_i}^{(2)},b_{B_i}^{(2)}}{\max} Tr((W_L)^2\rho_{A}^{(2)}\otimes\rho_{B}^{(2)})+\\
			&~p_3 \underset{b_{A_i}^{(3)},b_{B_i}^{(3)}}{\max} Tr((W_L)^2\rho_{A}^{(3)}\otimes\rho_{B}^{(3)})+\cdots
			\label{B2 max WLP}
		\end{split}
	\end{equation}
The maximum is taken over all the state parameters $(b_{A_i}^{(j)},b_{B_i}^{(j)})$ for all $i=1,2\cdots8$ and $j=1,2,3,\cdots$.\\
Recalling the linear witness operator $W_L^{c}$ and re-writing it as
	\begin{align}
		W_{L}^{c}=\frac{1}{33}\begin{pmatrix}
			1&0&0&0&-1&0&0&0&-1\\
			0&9&0&0&0&0&0&0&0\\
			0&0&1&0&0&0&0&0&0\\
			0&0&0&1&0&0&0&0&0\\
			-1&0&0&0&1&0&0&0&-1\\
			0&0&0&0&0&9&0&0&0\\
			0&0&0&0&0&0&9&0&0\\
			0&0&0&0&0&0&0&1&0\\
			-1&0&0&0&-1&0&0&0&1\\
		\end{pmatrix}
	\label{A13}
	\end{align}
Using the linear witness operator $W_{L}^{c}$ in (\ref{B2 max WLP}), we get
\begin{equation}
	\begin{split}
		Tr((W_L^{c})^2\rho_{sep})	&\leq p_1\underset{b_{A_i}^{(1)},b_{B_i}^{(1)}}{\max} Tr((W_L^{c})^2\rho_{A}^{(1)}\otimes\rho_{B}^{(1)})+\\
		&~p_2 \underset{b_{A_i}^{(2)},b_{B_i}^{(2)}}{\max} Tr((W_L^{c})^2\rho_{A}^{(2)}\otimes\rho_{B}^{(2)})+\\
		&~p_3 \underset{b_{A_i}^{(3)},b_{B_i}^{(3)}}{\max} Tr((W_L^{c})^2\rho_{A}^{(3)}\otimes\rho_{B}^{(3)})+\cdots
		\label{B2 max WLP1}
	\end{split}
\end{equation}
The $j^{th}$ term of the inequality (\ref{B2 max WLP1}) can be calculated as
\begin{equation}
	\underset{b_{A_i}^{(j)},b_{B_i}^{(j)}}{\max}(Tr((W_{L}^{c})^{2}\rho_{A}^{(j)}\otimes\rho_{B}^{(j)}))=0.0401555
	\label{max WLC}
\end{equation}
Thus, for any arbitrary separable state $\rho_{sep}$, we have
\begin{equation}
	\underset{\rho_{sep}}{\max}	Tr((W_L^c)^2\rho_{sep})=0.0401555
\end{equation}
	

\begin{thebibliography}{9}
	\bibitem{EPR_1935} A. Einstein, B. Podolsky and N. Rosen, Phys. Rev. \textbf{47}, 777 (1935).
	
	\bibitem{Schrodinger_1935} E. Schrodinger, Naturewissenschaften, \textbf{23}, 807 (1935).
	
	\bibitem{Caltech} Caltech Science Exchange. (n.d.). What is entanglement and why is it important? California Institute of Technology.
	
	
	\bibitem{Wootters_1993} C. H. Bennett, G. Brassard, C. Crepeau, R. Jozsa, A. Peres, and	W. K. Wootters, Phys. Rev. Lett. \textbf{70} , 1895 (1993).
	
	\bibitem{X-Mhu_2023} X-Min Hu, Y. Guo, B-H Liu, C-F Li, G-C Guo, Nat. Rev. Phys. \textbf{5}, 339 (2023).
	
	\bibitem{Bennett_2001} C. H. Bennett, D. P. DiVincenzo , P . W. Shor, J . A .   Smolin, B . M. Terhal, and
	W . K. Wooters, Phys. Rev. Lett. \textbf{87}, 077902 (2001).
	
	\bibitem{pati_2000} A. K. Pati, Phys. Rev. A. \textbf{63}, 014302 (2000).
		
	\bibitem{Nielsen_2000} M. A. Nielsen and I. L. Chuang, Quantum Computation and Quantum Information (CambridgeUniversity,Cambridge,
	England,2000).
	
	\bibitem{bennett_1992} C. H. Bennett, and S. J. Wiesner, Phys. Rev. Lett. \textbf{69}, 2881 (1992).
	
	\bibitem{harrow_2004} A. Harrow, P. Hayden, and D. Leung, Phys. Rev. Lett. \textbf{92}, 187901 (2004).
	
	\bibitem{ekert_1991} A. K. Ekert, Phys. Rev. Lett. \textbf{67}, 661 (1991).
	
	\bibitem{brun_2006} T. Brun, I. Devetak, and M-H Hsieh, Science \textbf{314}, 436 (2006).
	
	\bibitem{bravyi_2025} S. Bravyi, D. Lee, Z. Li, and B. Yoshida, Phys. Rev. Lett. \textbf{134}, 210602 (2025).
	
	\bibitem{ekert_1998} A. K. Ekert, and R. Jozsa, Philos. Trans. A Math. Phys. Eng. Sci. \textbf{356}, 1769 (1998).
	
	\bibitem{Horodecki_2006} P. Horodecki, and R. Augusiak, Quant. Inf. Proc. \textbf{199}, 19 (2006).
	
	\bibitem{Adhikari_2018} S. Adhikari,  Phys. Rev. A. \textbf{97}, 042344(2018).
	
	\bibitem{peres_1996} A. Peres, Phys. Rev. Lett. \textbf{77}, 1413 (1996).
	
	\bibitem{Horodecki_1996} M. Horodecki, P. Horodecki, and R.Horodecki, Phys. Lett. A. \textbf{223}, 1 (1996).
	
	\bibitem{rohira_2021} R. Rohira, S. Sanduja, and S. Adhikari, Quant. Inf. Proc. \textbf{374}, 20 (2021).
	
	\bibitem{scala_2024} G. Scala, A. Bera, G. Sarbicki, and D. Chru\'{s}ci\'{n}ski, J. Phys. A: Math. Theor. \textbf{57}, 195301 (2024).
	
	\bibitem{Horodecki_1997}  P. Horodecki, Phys. Lett. A. \textbf{232}, 333 (1997).
	
	
	\bibitem{R Kumar_2026} R. Kumar, and S. Adhikari,  Phys. Lett. A. \textbf{567}, 131195 (2026).
	
	\bibitem{Horodecki_1999} M. Horodecki, and P. Horodecki, Phys. Rev. A. \textbf{59}, 4206(1999).
	
	\bibitem{Nielsen_2001} M. A. Nielsen, and J. Kempe, Phys. Rev. Lett. \textbf{86}, 5184 (2001). 
	
	\bibitem{Rudolph_2003} O. Rudolph, Phys. Rev. A. \textbf{67}, 032312 (2003).
	
	\bibitem{Chen_2002} K. Chen, L. Wu, Quant. Inf and Comp.\textbf{ 3}, 193 (2003).

	
	\bibitem{Terhal_2001} B. M. Terhal, Linear Algebra Appl. \textbf{323}, 61 (2001).
	
	\bibitem{Horodecki_2009} R. Horodecki, P. Horodecki, M. Horodecki, and K. Horodecki, Rev. Mod. Phys. \textbf{81}, 865 (2009).
	
	\bibitem{Vogel_2009} J. Sperling, and W. Vogel, Phys. Rev. A. \textbf{79}, 022318 (2009).
	
	\bibitem{Gurvits_2004} L. Gurvits, J. Comput. Syst. Sci. \textbf{69}, 448 (2004).
	
	\bibitem{Doherty_2004} A. C. Doherty, P. A. Parrilo, and F. M. Spedalieri, Phys. Rev. A. \textbf{69}, 022308 (2004).
	
	\bibitem{Hou_2010} J. Hou, and Y. Guo, Phys. Rev. A. \textbf{82}, 052301 (2010).
	
	\bibitem{Chruscinski_2009} D. Chru\'{s}ci\'{n}ski, J. Pytel, and G. Sarbicki, Phys. Rev. A. \textbf{80}, 062314 (2009).
	
	\bibitem{Duan_2000} L. Duan, G. Giedke, J.I. Cirac, and P. Zoller, Phys. Rev. Lett. \textbf{84}, 2722 (2000).
	
	\bibitem{Uffink_2002} J. Uffink, Phys. Rev. Lett. \textbf{88}, 230406 (2002).
	
	\bibitem{Hoffmann_2003}  H. F. Hofmann, and S. Takeuchi, Phys. Rev. A. \textbf{68}, 032103 (2003).
	
	\bibitem{Guhne_2006}  O.  G\"{u}hne, and N.  L\"{u}tkenhaus, Phys. Rev. Lett. \textbf{96}, 170502 (2006).
	
	\bibitem{Guhne_2004}  O.  G\"{u}hne, Phys. Rev. Lett. \textbf{92}, 117903 (2004).

	\bibitem{Karimipour_2025} A. Tangestaninejad, and V. Karimipour, 	arXiv:2506.23262 [quant-ph], (2025).
	
	\bibitem{Lasser_1995} J. B. Lasserre, IEEE Trans. on Automatic Control. \textbf{40}, 1500 (1995). 
	
	\bibitem{Lin_2016} M. Lin, Czech. Math. J. \textbf{66}, 737 (2016).
	
	\bibitem{Zhang_2019} P. Zhang, Lin. Alg. and its Appl. \textbf{576}, 258 (2019).
	
	\bibitem{Meenakshi_1999} A. R. Meenkshi, and C. Rajian,  Lin. Alg. and its Appl. \textbf{295}, 295 (1999).
	
	\bibitem{Zhan_2002} X. Zhan, "Matrix Inequalities" (Lecture notes in Mathematics), springer, Berlin. \textbf{ 1790} (2002). 
	
	\bibitem{Guhne_2009} O. G\"{u}hne, and G.T\'{o}th, Phys. Rep. \textbf{474}, 1 (2009). 
	
	\bibitem{Milne_2015} A. Milne, D. Jennings, and T. Rudolph, Phys. Rev. A, \textbf{92}, 012311 (2015).
	
	\bibitem{Chi_2003} D. P. Chi, and S. Li, J. Phys. A \textbf{36}, 11503 (2003).

	\bibitem{Sen_2023} K. Sen, C. Srivastava, and U. Sen, J. Phys. A: Math. Theor. \textbf{56}, 315301 (2023).
	
	\bibitem{Bertlmann_2009} R. A. Bertlmann, and P. Krammer, Ann. Phys. \textbf{324}, 1388 (2009).
	
		\bibitem{Hiroshima_2000} S. Ishizaka, and T. Hiroshima, Phys. Rev. A. \textbf{62}, 022310 (2000).
	
		\bibitem{Wang_2010} M-J Zhao, Z-G Li, S-M Fei, and Z-X Wang, J. Phys. A: Math. Theor. \textbf{43}, 275203 (2010).
		
		
	\bibitem{Wudarski_2011} D. Chru\'{s}ci\'{n}ski, and F.A. Wudarski, Open Sys. Information Dyn. \textbf{18}, 387 (2011).
	
\bibitem{Chruscinki_2020} G. Sarbicki, G Scala, and D Chruscinski, Phys. Rev. Lett., \textbf{101},	012341 (2020).

\bibitem{Rudolph_2005} O. Rudolph, Quant. Info. Proc. 4, \textbf{219} (2005).


		\bibitem{Vicenta_2007} J. D. Vicente, Quant. Inf. Comput.\textbf{ 7}, 624 (2007).
		
		\bibitem{Rashi_2025} R. Jain, and S. Adhikari, Phys. Scr., \textbf{100}, 125102 (2025).
		
		\bibitem{Majumdar_2021}  B. Bhattacharya, S. Goswami, R. Mundra, N. Ganguly, I. Chakrabarty, S. Bhattacharya, and A. S. Majumdar, J. Phys. Commun. \textbf{5}, 065008 (2021).
	
	
	
%
%
%
	
	
%
%
%
%
%
%
%
%
%
%
%
%
%
%
%
%
%
%
	

	
\end{thebibliography}
\end{document}